\documentclass[a4paper]{article}
\usepackage{a4wide}
\usepackage{named}

% Use the postscript times font!
\usepackage{times}
\usepackage{amsmath,amsfonts,amssymb}
\usepackage{hhline}
\usepackage{pifont} % for \ding
\usepackage{tikz}
\usetikzlibrary{positioning,arrows,fit,decorations.pathmorphing}
\usetikzlibrary{patterns}
\usetikzlibrary{calc}
\usepackage{url}
\usepackage[only bigsqcap]{stmaryrd} %% for \bigsqcap

\newtheorem{theorem}{Theorem}
\newtheorem{lemma}[theorem]{Lemma}
\newtheorem{example}[theorem]{Example}

\newcommand{\qedsymbol}{\ding{113}}
\newenvironment{proof}{\par\noindent\textbf{Proof.}}{\mbox{}\hfill\qedsymbol\par\bigskip}

%*******************************************

\newcommand{\OWLQL}{\textsl{OWL\,2\,QL}}
\newcommand{\TQL}{\textsl{TQL}}

\newcommand{\DL}{\textsl{DL-Lite}}
\newcommand{\DLh}{{\ensuremath{\DL_\textit{horn}^{\smash{\mathcal{H}}}}}}
\newcommand{\q}{\boldsymbol q}

\newcommand{\coNP}{\textsc{coNP}}
\newcommand{\NP}{\textsc{NP}}

\newcommand{\ind}{\mathop{\mathsf{ind}}}
\newcommand{\ext}{\mathsf{ext}}
\newcommand{\tw}{\mathsf{w}}
\newcommand{\op}{\mathsf{cl}}

\newcommand{\type}{\boldsymbol{t}}

\newcommand{\A}{\mathcal{A}}
\newcommand{\T}{\mathcal{T}}
\newcommand{\I}{\mathcal{I}}
\newcommand{\K}{\mathcal{K}}
\newcommand{\C}{\mathcal{C}}
\newcommand{\Smc}{\mathcal{S}}

\newcommand{\g}{\mathfrak{g}}

\newcommand{\tc}{\mathsf{tem}}

\newcommand{\nxt}{{\ensuremath\raisebox{0.25ex}{\text{\scriptsize$\bigcirc$}}}}

\newcommand{\FT}{\scriptscriptstyle\! F}
\newcommand{\PT}{\scriptscriptstyle\! P}

\newcommand{\nm}[1]{\textit{#1}}
\newcommand{\roleT}{\mathsf{R}_\T} %% you can replace back to r^\pm(\T) 
\newcommand{\CKTR}{\mathcal{C}_{\smash{\K_{\T\!\!,R}}}}

%********************************
\title{Temporal Description Logic for Ontology-Based Data Access (Extended Version)}
\author{Alessandro Artale$^1$\!, Roman Kontchakov$^2$\!, Frank Wolter$^3$ and Michael Zakharyaschev$^2$
\\
{\parbox[t]{70mm}{\centering $^1$Faculty of Computer Science\\
Free University of Bozen-Bolzano, Italy\\
$^3$ Department of Computer Science,\\ University of Liverpool, U.K.}}\hspace*{7mm}
{\parbox[t]{70mm}{\centering $^2$Department of Computer Science and Information Systems\\ Birkbeck, University of London, U.K.}}
}
\begin{document}
\maketitle
\begin{abstract}
Our aim is to investigate ontology-based data access over temporal data with validity time and ontologies capable of  temporal conceptual modelling. To this end, we design a temporal description logic, \TQL, that extends the standard ontology language \OWLQL, provides basic means for temporal conceptual modelling and ensures first-order rewritability of conjunctive queries for suitably defined data instances with validity time.
%In ontology-based data access (OBDA), the most important 
%technique for answering queries is to rewrite the ontology and the query into a single first-order query that can be evaluated directly over the data. The current W3C standard for OBDA is OWL 2 QL which is based on the DL-Lite family of description logics. The aim of this paper is to extend OBDA to data with validity time and ontologies that admit temporal conceptual modelling. To this end, we design a temporal description logic, \TQL, that extends the standard ontology language \OWLQL, provides basic means for temporal conceptual modelling and ensures first-order rewritability of conjunctive queries for suitably defined data instances with validity time.
\end{abstract}

%*******************

\section{Introduction}

One of the most promising and exciting applications of description logics (DLs) is to supply ontology languages and query answering technologies for ontology-based data access (OBDA), a way of querying incomplete data sources that uses ontologies to provide additional conceptual information about the domains of interest and enrich the query vocabulary. The current W3C standard language for OBDA is \OWLQL{}, which was built on the \DL{} family of DLs~\cite{CDLLR06,dllite07}. 
To answer a conjunctive query $\q$ over an \OWLQL{} ontology $\T$ and instance data $\A$, an OBDA system first `rewrites' $\q$ and $\T$ into a new first-order query $\q'$ and then evaluates $\q'$ over $\A$ (without using the ontology). The evaluation task is performed by a conventional relational database management system.
Finding efficient and practical rewritings has been the subject of extensive research~\cite{Perez-UrbinaMH09,RosatiAKR10,KR10our,Chortaras-etal2011,2011_Gottlob,DBLP:conf/rr/KonigLMT12}. Another fundamental feature of \OWLQL, supplementing its first-order rewritability, is the ability to capture basic conceptual data modelling constructs~\cite{BeCD05,artale:et:al:er-07}.

In applications, instance data is often time-dependent: employment contracts come to an end, parliaments are elected, children are born. Temporal data can be modelled by pairs consisting of facts and their validity time; for example, $\nm{givesBirth}(\nm{diana}, \nm{william}, 1982)$. To query data with validity time, it would be useful to employ an ontology that provides a conceptual model for both static and temporal aspects of the domain of interest. Thus, when querying the fact above, one could use the knowledge that, if $x$ gives birth to $y$, then $x$ becomes a mother of $y$ from that moment on: 
\begin{equation}\label{diana}
\Diamond_{\PT}\nm{givesBirth} \sqsubseteq \nm{motherOf},
\end{equation}
where $\Diamond_{\PT}$ reads `sometime in the past.\!' 
\OWLQL{} does not support temporal conceptual modelling and, rather surprisingly, no attempt has yet been made to lift the OBDA framework to temporal ontologies and data.  

Temporal extensions of DLs have been investigated since 1993; see~\cite{GKWZ03,LuWoZa-TIME-08,AF05} for surveys and \cite{DBLP:conf/ijcai/FranconiT11,DBLP:conf/rr/Gutierrez-BasultoK12,DBLP:journals/tocl/BaaderGL12} for more recent developments. Temporalised \DL{} logics have been constructed for temporal conceptual data modelling~\cite{AKRZ:ER10}. But unfortunately, none of the existing temporal DLs supports first-order rewritability. 

The aim of this paper is to design a temporal DL that contains \OWLQL, provides basic means for temporal conceptual modelling and, at the same time, ensures first-order rewritability of conjunctive queries (for suitably defined data instances with validity time).

The temporal extension \TQL{} of \OWLQL{} we present here is interpreted over sequences 
$\I(n)$, $n \in \mathbb Z$, of standard DL structures reflecting possible evolutions of data. 
TBox axioms are interpreted globally, that is, are assumed to hold in all of the $\I(n)$, but the 
concepts and roles they contain can vary in time. ABox assertions (temporal data) are time-stamped 
unary (for concepts) and binary (for roles) predicates that hold at the specified moments of time. 
%For example, the assertion ${\it Retired}({\it bob},2012)$ and the axiom $\Diamond_P {\it Retired} \sqsubseteq {\it Pensioner}$ 
%imply that starting from 2012 Bob is a pensioner. 
Concept (role) inclusions of \TQL{} generalise \OWLQL{} inclusions by
allowing intersections of basic concepts (roles) in the left-hand
side, possibly prefixed with temporal operators $\Diamond_{\PT}$ (sometime
in the past) or $\Diamond_{\FT}$ (sometime in the future). Among other
things, one can express in \TQL{} that a concept/role name is rigid (or time-independent), persistent in the past/future or instantaneous. For example, $\Diamond_{\FT}\Diamond_{\PT}\nm{Person} \sqsubseteq \nm{Person}$ states that the concept $\nm{Person}$ is rigid, $\Diamond_{\PT}\nm{hasName}\sqsubseteq \nm{hasName}$ says
that the role $\nm{hasName}$ is persistent in the future, while $\nm{givesBirth} \sqcap \Diamond_{\PT} \nm{givesBirth} \sqsubseteq \bot$ implies that $\nm{givesBirth}$ is instantaneous.  Inclusions such as $\Diamond_{\PT} \nm{Start} \sqcap \Diamond_{\FT} \nm{End} \sqsubseteq \nm{Employed}$ represent convexity (or existential rigidity) of concepts or roles. However, in contrast to most existing temporal DLs, we cannot use temporal operators in the right-hand side of inclusions (e.g., to say that every student will eventually graduate: $\nm{Student} \sqsubseteq \Diamond_{\FT}\nm{Graduate}$).

In conjunctive queries (CQs) over \TQL{} knowledge bases, we allow time-stamped predicates together with atoms of the form $(\tau < \tau')$ or $(\tau = \tau')$, where $\tau, \tau'$ are temporal constants denoting integers or variables ranging over integers.

Our main result is that, given a \TQL{} TBox $\T$ and a CQ $\q$, one
can construct a union $\q'$ of CQs such that the answers to $\q$ over
$\T$ and any temporal ABox $\A$ can be computed by evaluating $\q'$
over $\A$ extended with the temporal precedence relation $<$ between
the moments of time in $\A$. For example, the query $\nm{motherOf}(x,y,t)$ over~\eqref{diana} can be rewritten as
\begin{equation*}
\nm{motherOf}(x,y,t) \lor \exists t' \, \big((t'<t) \land \nm{givesBirth}(x,y,t')\big).
\end{equation*}
Note that the addition of the transitive relation $<$ to the ABox is unavoidable: without it, there exists no first-order rewriting even  for the simple example above~\cite[Cor.~4.13]{Libkin}.  

From a technical viewpoint, one of the challenges we are facing is that, in contrast to known OBDA languages with CQ rewritability (including fragments of datalog$^\pm$~\cite{CaliGL12}), witnesses for existential quantifiers outside the ABox are not independent from each other but interact via the temporal precedence relation. For this reason, a reduction to known languages  appears to be impossible and a novel approach to rewriting has to be found. We also observe that straightforward temporal extensions of \TQL{} lose first-order rewritability.
For example, query answering over the ontology $\{{\it Student} \sqsubseteq \Diamond_{\FT}{\it Graduate}\}$ is shown to be non-tractable. 

\phantom{Omitted proofs can be found at}%~\cite{proofs}.}

%Finally,\nb{M: stopped here} to emphasise the main steps of the construction and the underlying principles, we do not
%consider efficiency issues in this paper. For example, replacing time points by intervals in the representation
%of data and, in particular, in $[\A]$ when we evaluate queries.

%**********************

\section{$\TQL$: a Temporal Extension of \OWLQL}

\emph{Concepts} $C$ and \emph{roles} $S$ of $\TQL$ are defined by the  grammar: 
\begin{align*}
R \ &::=\ \bot \quad \mid \quad P_i \quad \mid \quad P_i^- ,  \\
B \ &::=\  
\bot \quad \mid \quad A_i \quad \mid \quad \exists R, \\
C\ & ::=\ B \quad \mid \quad C_1 \sqcap C_2 \quad \mid \quad 
\Diamond_{\PT}C \quad \mid \quad \Diamond_{\FT}C,\\ 
%\ \mid \ \nxt_{P}C \ \mid \ \nxt_{F}C,
%
S \ & ::=\ R \quad \mid \quad S_1 \sqcap S_2 \quad \mid \quad 
\Diamond_{\PT} S \quad \mid \quad \Diamond_{\FT} S,
\end{align*}
where $A_i$ is a \emph{concept name}, $P_i$ a \emph{role name} $(i \ge 0)$, and $\Diamond_{\PT}$ and $\Diamond_{\FT}$ are temporal operators `sometime in the past' and `sometime in the future,' respectively. We call concepts and roles of the form $B$ and $R$ \emph{basic}.  
A \TQL{} \emph{TBox}, $\T$, is a finite set of \emph{concept} and \emph{role inclusions} of the form 
\begin{align*}
C \sqsubseteq B, \qquad %C \sqsubseteq \bot, \qquad 
S \sqsubseteq R, %\qquad S \sqsubseteq \bot,
\end{align*}
which are assumed to hold globally (over the whole timeline). Note that the $\Diamond_{\FT/\PT}$-free fragment of \TQL{} is an extension of the description logic \DLh~\cite{dllite-jair09} with role inclusions of the form $R_1 \sqcap \dots \sqcap R_n \sqsubseteq R$;   it properly contains \OWLQL{} (the missing role constraints can be safely added to the language). 

A \TQL{} \emph{ABox}, $\mathcal{A}$, is a (finite) set of atoms  $P_i(a,b,n)$ and $A_i(a,n)$, where $a, b$ are \emph{individual constants} and $n \in \mathbb Z$ a \emph{temporal constant}. The set of individual constants in $\A$ is denoted by $\ind(\mathcal{A})$, and the set of temporal constants by $\tc(\mathcal{A})$. A \TQL{} \emph{knowledge base} (\emph{KB}) is a pair $\mathcal{K} = (\mathcal{T}, \mathcal{A})$, where $\T$ is a TBox and $\A$ an ABox. 

A \emph{temporal interpretation}, $\mathcal{I}$, is given by the ordered set $(\mathbb Z,<)$ of \emph{time points} and standard (atemporal) interpretations $\mathcal{I}(n)= (\Delta^\I, \cdot^{\mathcal{I}(n)})$, for each $n \in \mathbb Z$. Thus, $\Delta^\I \ne \emptyset$ is the common domain of all $\I(n)$, $a_i^{\mathcal{I}(n)} \in \Delta^\I$, $A_i^{\mathcal{I}(n)} \subseteq \Delta^\I$ and $P_i^{\mathcal{I}(n)} \subseteq \Delta^\I \times \Delta^\I$. We assume that $a_i^{\mathcal{I}(n)} = a_i^{\mathcal{I}(0)}$, for all $n\in\mathbb{Z}$. To simplify presentation, we adopt the \emph{unique name assumption}, that is, $a_i^{\mathcal{I}(n)} \ne a_j^{\mathcal{I}(n)}$ for $i \ne j$ (although the obtained results hold without it). 
The role and concept constructs are interpreted in $\I$ as follows, where $n \in \mathbb Z$:
\begin{align*}
& \bot^{\mathcal{I}(n)} = \emptyset \text{\footnotesize{} (for both concepts and roles)},\\
& (P^-_i)^{\mathcal{I}(n)} = \{ (x,y) \mid (y,x) \in P_i^{\mathcal{I}(n)} \},\\
& (\exists R)^{\I(n)} = \{ x \mid  (x,y) \in R^{\I(n)}, \text{ for some } y \},\\
& (C_1 \sqcap C_2)^{\I(n)} = C_1^{\I(n)} \cap C_2^{\I(n)},\\
& (\Diamond_{\PT} C)^{\I(n)} = \{ x \mid  x \in C^{\I(m)}, \text{ for some } m < n \},\\
& (\Diamond_{\FT} C)^{\I(n)} = \{ x \mid  x \in C^{\I(m)}, \text{ for some } m > n \},\\
%
%& (\nxt_P C)^{\I(n)} = C^{\I(n-1)},\\
%
%& (\nxt_F C)^{\I(n)} = C^{\I(n+1)}.
%
& (S_1 \sqcap S_2)^{\I(n)} = S_1^{\I(n)} \cap S_2^{\I(n)},\\
& (\Diamond_{\PT} S)^{\I(n)} = \{ (x,y) \mid  (x,y) \in S^{\I(m)}, \text{ for some } m < n \},\\
& (\Diamond_{\FT} S)^{\I(n)} = \{ (x,y) \mid  (x,y) \in S^{\I(m)}, \text{ for some } m > n \}.
\end{align*}
The \emph{satisfaction relation} $\models$ is defined by taking 
\begin{align*}
& \mathcal{I} \models A_i(a,n) &&\text{iff} \qquad a^{\I(n)} \in A_i^{\I(n)},\\
& \mathcal{I} \models P_i(a,b,n) &&\text{iff} \qquad (a^{\I(n)}, b^{\I(n)}) \in P_i^{\I(n)},\\
& \mathcal{I} \models C \sqsubseteq B &&\text{iff} \qquad C^{\I(n)} \subseteq  B^{\I(n)}, \text{ for all } n \in \mathbb Z,\\
& \mathcal{I} \models S \sqsubseteq R &&\text{iff} \qquad S^{\I(n)} \subseteq R^{\I(n)}, \text{ for all } n \in \mathbb Z.
\end{align*}
If all inclusions in $\T$ and atoms in $\A$ are satisfied in $\I$, we call $\I$ a \emph{model} of $\K = (\T, \A)$ and write $\I \models \K$.
 
A \emph{conjunctive query} (\emph{CQ}) is a (two-sorted) first-order formula $\q(\vec{x}, \vec{s}) = \exists \vec{y}, \vec{t} \, \varphi(\vec{x}, \vec{y}, \vec{s}, \vec{t})$, where $\varphi(\vec{x}, \vec{y}, \vec{s}, \vec{t})$ is a conjunction of atoms of the form $A_i(\xi,\tau)$, $P_i(\xi, \zeta, \tau)$, $(\tau = \sigma)$ and $(\tau < \sigma)$, with $\xi$, $\zeta$ being \emph{individual terms}---individual constants or variables in $\vec{x}$, $\vec{y}$---and $\tau$, $\sigma$ \emph{temporal terms}---temporal constants or variables in $\vec{t}$, $\vec{s}$. In a \emph{positive existential query} (PEQ) $\q$, the formula $\varphi$ can also contain $\lor$. A \emph{union of CQs} (UCQ) is a disjunction of CQs (so every PEQ is equivalent to an exponentially larger UCQ). 

Given a KB $\K = (\T, \A)$ and a CQ $\q(\vec{x},\vec{s})$, we call tuples $\vec{a} \subseteq \ind(\A)$ and $\vec{n} \subseteq \tc(\A)$ a \emph{certain answer} to $\q(\vec{x},\vec{s})$ over $\K$ and write $\K \models \q(\vec{a},\vec{n})$, if $\I \models \q(\vec{a},\vec{n})$ for every model $\I$ of $\K$ (understood as a two-sorted first-order model). 

\begin{example}\em
Suppose Bob was a lecturer at UCL between times $n_1$ and $n_2$, after which he was appointed professor on a permanent contract. To model this situation, we use individual names, $e_1$ and $e_2$, to represent the two events of Bob's employment. The ABox will contain $n_1 < n_2$ and the atoms 
%
%\begin{multline*}
$\nm{lect}(\nm{bob}, e_1, n_1)$, $\nm{lect}(\nm{bob}, e_1, n_2)$, 
$\nm{prof}(\nm{bob}, e_2, n_2+1)$.
%\end{multline*}
%
In the TBox, we make sure that everybody is holding the corresponding post over the duration of the contract, and include other knowledge about the university life:
%
%represent standard information such as: lecturer duties include teaching and PhD supervision; professors only have to supervise PhD students, etc.:
%
\begin{align*}
&\Diamond_{\PT} \nm{lect} \sqcap \Diamond_{\FT} \nm{lect} \sqsubseteq \nm{lect}, &&
\Diamond_{\PT} \nm{prof} \sqsubseteq \nm{prof},\\
& \exists \nm{lect} \sqsubseteq \nm{Lecturer}, && \exists \nm{prof} \sqsubseteq \nm{Professor},\\
& %\nm{Lecturer} \sqsubseteq \exists \nm{supervisesPhD}, && 
\nm{Professor} \sqsubseteq \exists \nm{supervisesPhD}, &&
\nm{Professor} \sqsubseteq \nm{Staff}, \\
& \Diamond_{\PT} \nm{supervisesPhD} \sqcap \Diamond_{\FT} \nm{supervisesPhD} \sqsubseteq \nm{supervisesPhD}, \hspace*{-3cm} && \\
& && \text{etc.}
%& \nm{Lecturer} \sqsubseteq \exists \nm{teaches},\\
%%
%& \nm{Lecturer} \sqsubseteq \nm{Staff}, && \nm{Professor} \sqsubseteq \nm{Staff}.
\end{align*} 
We can now obtain staff who supervised PhDs between times $k_1$ and $k_2$ by posing the following CQ:
% $\q(x)$:
%
\begin{equation*}
\exists y,t \, \bigl( (k_1 < t < k_2) \land \nm{Staff}(x, t) \land \nm{supervisesPhD}(x,y,t) \bigr).
\end{equation*}
\end{example}

The key idea of OBDA is to reduce answering CQs over KBs to evaluating FO-queries over relational databases. To obtain such a
reduction for \TQL{} KBs, we employ a very basic type of temporal
databases. With every \TQL{} ABox $\mathcal{A}$, we associate a data
instance $[\A]$ which contains all atoms from $\A$ as well as the
atoms $(n_{1}<n_{2})$ such that $n_i\in\mathbb{Z}$ with $\min \tc(\A)
\le n_i \le \max \tc(\A)$ and $n_{1}<n_{2}$. Thus, in addition to
$\A$, we explicitly include in $[\A]$ the temporal precedence relation over the \emph{convex closure} of the time points that occur in $\A$. (Note that, in standard temporal databases, the order over timestamps is built-in.) The main result of this paper is the following:

\begin{theorem}\label{main}
Let $\q(\vec{x},\vec{s})$ be a CQ and $\T$ a \TQL{} TBox. Then one can construct a UCQ $\q'(\vec{x},\vec{s})$ such that, for any consistent KB $(\T,\A)$ such that $\A$ contains all temporal constants from $\q$, any $\vec{a} \subseteq \ind(\A)$ and $\vec{n} \subseteq \tc(\A)$, we have %of appropriate length, 
$(\T,\A) \models \q(\vec{a},\vec{n})$ iff $[\A] \models \q'(\vec{a},\vec{n})$.
\end{theorem}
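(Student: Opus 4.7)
The plan is to follow the canonical OBDA template, adapted to the temporal setting: (i) construct a universal (canonical) model $\C_{\K}$ of $\K = (\T,\A)$ such that the certain answers to any CQ over $\K$ coincide with the matches of that CQ in $\C_{\K}$; (ii) design a rewriting procedure that unfolds $\q$ according to $\T$ so that its answers over $[\A]$ capture exactly the $\C_{\K}$-matches of $\q$; (iii) prove termination of the rewriting and correctness.

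For step (i), I would first saturate $\T$ to compute, for every basic concept $B$ and basic role $R$, the set of concepts $C$ and roles $S$ with $\T \models C \sqsubseteq B$ and $\T \models S \sqsubseteq R$, accounting for inverses, for intersections in left-hand sides, and for the temporal prefixes $\Diamond_{\PT}$ and $\Diamond_{\FT}$. Then I define a chase: starting from $\A$ and working globally over time, whenever derived facts trigger an axiom of the form $C \sqsubseteq \exists R$ at some time $n$ and no witness is yet present, introduce a fresh anonymous element $w$ together with an $R$-edge at time $n$. Because right-hand sides of \TQL{} inclusions contain no temporal operators, every witness is born at a single time point; the chase on $w$ then produces a tree-shaped anonymous part attached to the ABox. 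Universality of $\C_{\K}$ for CQ answering follows by a standard homomorphism argument.

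For step (ii), I would iteratively process each atom of $\q$, replacing it by all ways it could match in $\C_{\K}$: either (a) via an ABox fact whose time lies in $\tc(\A)$ after TBox-entailed unfolding, where temporal prefixes expand into order atoms (so e.g.\ a $\Diamond_{\PT} B(\xi,\tau)$ contribution becomes $\exists t' \, ((t' < \tau) \land B(\xi,t'))$, exactly as in the \nm{motherOf} example), or (b) via an anonymous witness created by some $C \sqsubseteq \exists R$ axiom, in which case query atoms homomorphically mapped to the same witness must be unified and the witness's time becomes an existentially quantified temporal variable constrained only by $<$-atoms relative to $\tc(\A)$. Iteration continues until the UCQ stabilises. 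Finiteness of the resulting rewriting follows because the number of possible `witness types' is bounded by $\T$, and because each witness lives at a single time in $\C_{\K}$, so the rewriting does not generate unboundedly many time variables.

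The main obstacle will be the completeness direction of step (iii). Unlike in atemporal \DL, where the anonymous part is a forest and each witness participates only in its own local neighbourhood, here a single anonymous witness born at time $n$ can be matched by query atoms at different times (through axioms whose left-hand side uses $\Diamond_{\PT}$ or $\Diamond_{\FT}$), so the rewriting must reconstruct a globally consistent temporal pattern across several atoms simultaneously. To argue that every homomorphism of $\q$ into $\C_{\K}$ is captured by some disjunct of $\q'$ evaluated over $[\A]$, I would trace each image lying in the anonymous part back to its ABox trigger and show that witness time coordinates can be retracted --- without losing the match --- onto order-equivalent positions inside the convex closure of $\tc(\A)$. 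This retraction is exactly what the explicit $<$-relation in $[\A]$ is designed to support, and it is the technical reason behind the assumption that $\A$ contains all temporal constants occurring in $\q$.
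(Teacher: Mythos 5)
Your overall architecture (canonical model, unfolding-style rewriting, completeness via a retraction of matches) is the same as the paper's, but two of your key finiteness claims are asserted where the paper needs its main technical lemmas, and one of them is based on a false premise. First, the claim that ``each witness lives at a single time in $\C_\K$'' is incorrect: a witness $v$ introduced by $u \leadsto^n_R v$ is \emph{born} at time $n$, but via the left-hand-side operators $\Diamond_{\PT}$, $\Diamond_{\FT}$ it acquires role and concept memberships at unboundedly many other time points, and these in turn fire further $C \sqsubseteq \exists R'$ axioms, spawning new witnesses at infinitely many times (in Example~\ref{example1}, $v$ generates a fresh $S$-successor at \emph{every} $n \ge 1$). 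So the anonymous part is infinite in the temporal direction as well as in depth, and your bound ``number of witness types is bounded by $\T$'' does not by itself control the rewriting. The paper tames this with the monotonicity lemmas (Lemmas~\ref{lem:kk} and~\ref{l:anonym}), the concept normal form, and the observation that a UCQ can only express ``at least $n$ moments apart'' (the $\delta^n$ formulas), with monotonicity being precisely what makes ``at least'' as good as ``exactly''; none of this is present or replaced in your argument.

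Second, your termination claim for the unfolding of the ABox part (``iteration continues until the UCQ stabilises'') is not justified. Each unfolding of an inclusion $\Diamond_{\PT}B \sqsubseteq B'$ or $\Diamond_{\PT}R \sqsubseteq R'$ introduces a fresh existentially quantified temporal variable with a new order atom, so the UCQ never stabilises syntactically; what must be shown is that after boundedly many rounds no new \emph{answers} are produced. This is exactly Lemma~\ref{lem:bound} (the bound $n_\T = (4|\T|)^4$), whose proof requires the non-obvious ``cross over'' counting argument to bound the number of alternations between past-directed and future-directed rule applications. Without an argument of this kind your step (ii) does not yield a (finite) UCQ. Your step (iii) retraction idea is in the right spirit (it corresponds to the width bound $\ell$ in Theorem~\ref{thm:canonical-bounded} and to Lemma~\ref{reduction}), though note that matches need not retract \emph{into} the convex closure of $\tc(\A)$ --- they may legitimately live up to $\ell$ steps outside it, and the passage to $[\A]$ is handled separately by deleting temporal variables that are unanchored to the data.
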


Such a UCQ $\q'(\vec{x},\vec{s})$ is called a \emph{rewriting} for $\q$ and $\T$. We begin by showing how to compute rewritings for CQs over KBs with empty TBoxes.

For an ABox $\A$, we denote by $\A^{\mathbb{Z}}$ the \emph{infinite} data instance which contains the atoms in $\A$ as well as all $(n_{1}<n_{2})$ such that $n_{1},n_{2}\in \mathbb{Z}$ and $n_{1}<n_{2}$. It will be convenient to regard CQs $\q(\vec{x}, \vec{s})$ as \emph{sets} of atoms, so that we can write, e.g., $A(\xi,\tau)\in \q$. We say that $\q$ is \emph{totally ordered} if, for any temporal terms $\tau,\tau'$ in $\q$, at least one of the constraints $\tau < \tau'$, $\tau = \tau'$ or $\tau' < \tau$ is in $\q$ and the set of such constraints is consistent (in the sense that it can be satisfied in $\mathbb Z$). Clearly, every CQ is equivalent to a union of totally ordered CQs (note that the empty union is $\bot$).
\begin{lemma}\label{reduction}
For every UCQ $\q(\vec{x},\vec{s})$, one can compute a UCQ $\q'(\vec{x},\vec{s})$ such that, for any ABox $\A$ containing all temporal constants from $\q$ and any $\vec{a}\subseteq \ind(\A)$,
$\vec{n}\subseteq \tc(\A)$, we have
\begin{equation*}
\A^{\mathbb{Z}} \models \q(\vec{a},\vec{n}) \quad \mbox{ iff } \quad [\A] \models \q'(\vec{a},\vec{n}).
\end{equation*}
\end{lemma}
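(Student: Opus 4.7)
The plan is to reduce the problem to an enumeration over the order type of the temporal terms in each CQ, so that the difference between $\A^{\mathbb{Z}}$ and $[\A]$, which lies entirely in the $<$ relation (the former having it over all of $\mathbb{Z}$, the latter only over the convex closure of $\tc(\A)$), can be localised to the placement of temporal variables that fall outside the convex closure. First, I would use the fact noted just before the lemma that every CQ is equivalent to a finite union of totally ordered CQs, and distribute this rewriting over the disjuncts of the input UCQ $\q$. So it suffices to rewrite each totally ordered CQ $\q_i$ separately; the output $\q'$ will be the disjunction of the resulting pieces.

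For a fixed totally ordered CQ, the equality/inequality atoms partition the temporal terms into linearly ordered equivalence classes $C_1 < \dots < C_k$. I would call a class \emph{anchored} if it contains either a temporal constant (which, by the hypothesis that $\A$ contains all constants from $\q$, lies in $\tc(\A)$) or a temporal variable that occurs in some atom $A_i(\xi,\tau)$ or $P_i(\xi,\zeta,\tau)$, and \emph{free} otherwise. The key observation is that, in any homomorphism from $\q_i$ into $\A^{\mathbb{Z}}$, an anchored class must be mapped to a point of $\tc(\A)$ (since concept and role atoms can only be witnessed at time points from $\A$), whereas a free class only needs to sit at an integer respecting the total order with the other classes. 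Hence a satisfying assignment exists in $\A^{\mathbb{Z}}$ iff one can (i) choose positions in $\tc(\A)$ for the anchored classes consistent with the total order and with all atoms, and (ii) fit the free classes into $\mathbb{Z}$ between the chosen anchored positions.

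Condition (i) is already expressible in $[\A]$: the individual and time atoms, together with $<$ on anchored positions, all lie in $[\A]$ since the anchored positions belong to the convex closure. The delicate part is condition (ii), and this is where the plan's main obstacle lies. A \emph{free gap} is a maximal consecutive block of $k \ge 1$ free classes between two anchored classes, or below all anchored classes, or above all anchored classes. Interior gaps of size $k$ between anchored positions $\tau < \tau'$ are realisable in $\mathbb{Z}$ iff $\tau' - \tau > k$, and I would encode this in the rewriting by introducing $k$ fresh existentially quantified temporal variables $u_1 < \dots < u_k$ together with atoms $\tau < u_1$ and $u_k < \tau'$; since $[\A]$ contains all integers in the convex closure of $\tc(\A)$ and the full $<$ relation on them, this is equivalent to the desired spacing. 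For the exterior gaps (below the minimum or above the maximum anchored position), $\A^{\mathbb{Z}}$ always provides sufficiently many witnesses because $\mathbb{Z}$ is unbounded in both directions, so these free classes can simply be deleted from $\q_i$. Summing over all totally ordered refinements of all disjuncts of $\q$ gives the required UCQ $\q'$, and both directions of the equivalence follow by inspecting homomorphisms; soundness is immediate from the construction, and completeness follows from the case analysis of free gaps just described.
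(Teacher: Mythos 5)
Your overall route is the paper's: reduce to totally ordered CQs, observe that the only discrepancy between $\A^{\mathbb{Z}}$ and $[\A]$ concerns bound temporal variables that are unanchored towards $-\infty$ or $+\infty$, keep the interior ones (integers strictly between two points of the convex closure are themselves in the convex closure, so $[\A]$ supplies the same witnesses as $\A^{\mathbb{Z}}$), and delete the exterior ones. Your explicit interior/exterior gap analysis is a reasonable elaboration of what the paper's one-line verification leaves implicit.

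However, there is a genuine gap in your classification of \emph{anchored} classes: you count a class as anchored only if it contains a temporal constant or a variable occurring in a concept or role atom, so a class consisting of an \emph{answer} variable $s$ that occurs only in order atoms is declared free and, if exterior, deleted. That is unsound: by the certain-answer semantics $s$ is instantiated to a fixed $n \in \tc(\A)$, so the order atoms on $s$ are genuine constraints to be checked, not existential witnesses supplied by the unboundedness of $\mathbb{Z}$. Concretely, for $\q(x,s) = \exists t\,\bigl((s<t) \wedge A(x,t)\bigr)$ and $\A = \{A(a,0)\}$, your construction deletes the exterior free class $\{s\}$ and yields $\q'(x,s) = \exists t\, A(x,t)$, so that $[\A] \models \q'(a,0)$ although $\A^{\mathbb{Z}} \not\models \q(a,0)$. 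The fix is exactly the clause in the paper's removal condition that you dropped: a bound variable $t$ may be removed only if \emph{no temporal constant and no free temporal variable} is required to lie below (respectively, above) it, in addition to there being no concept or role atom at or below (above) it. With free variables counted as anchors your argument goes through, since their values lie in $\tc(\A)$ and hence in the convex closure.
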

\begin{proof}
We assume that every CQ $\q_0$ in $\q$ is totally ordered. In each such $\q_0$, we remove a bound temporal variable $t$ together with the atoms containing $t$ if at least one of the following two conditions holds:
\begin{itemize}
\item[--] there is no temporal constant or free temporal variable $\tau$ with $(\tau<t) \in \q_{0}$, and for no temporal term $\tau'$ and atom of the form $A(\xi,\tau')$ or $P(\xi,\zeta,\tau')$ in $\q_{0}$ do we have $(\tau' <t)$ or $(\tau'=t)$ in $\q_{0}$;  

\item[--] the same as above but with $<$ replaced by $>$.
\end{itemize}
It is readily checked that the resulting UCQ is as required.
\end{proof}

\begin{example}\label{ex:flat}\em
Suppose $\T = \{\Diamond_{\FT} C \sqsubseteq A, \ \Diamond_{\PT}A\sqsubseteq B\}$ and $\q(x,s) = B(x,s)$. 
Then, for any $\A$, $a\in \ind(\A)$, $n\in \tc(\A)$, we have 
$(\T,\A) \models \q(a,n)$ iff $\A^{\mathbb{Z}} \models \q'(a,n)$,
where 
\begin{equation*}
\q'(x,s) = B(x,s) \ \ \lor \ \ \exists t \, \bigl((t < s) \land A(x,t)\bigr) \ \ 
\lor \ \ \exists t,r \, \bigl((t < s) \land (t < r) \land C(x,r)\bigr).
\end{equation*}
Note, however, that $\q'$ is \emph{not} a rewriting for $\q$ and $\T$. Take, for example, $\A=\{C(a,0)\}$. Then $(\T,\A)\models B(a,0)$ but $[\A] \not\models \q'(a,0)$. A correct rewriting is obtained by replacing the last disjunct in $\q'$ with $\exists r\, C(x,r)$; it can be computed by applying Lemma~\ref{reduction} to $\q'$ and slightly  simplifying the result.
\end{example}

In view of Lemma~\ref{reduction}, from now on we will only focus on rewritings over $\A^{\mathbb Z}$. 

The problem of finding rewritings for CQs and \TQL{} TBoxes can be reduced to the case where the TBoxes only contain inclusions of the form 
\begin{align*}
& B_{1} \sqcap B_{2} \sqsubseteq B, \qquad \Diamond_{\FT}B_{1} \sqsubseteq B_{2},\qquad 
\Diamond_{\PT}B_{1} \sqsubseteq B_{2},\\
& R_{1} \sqcap R_{2} \sqsubseteq R, \qquad \Diamond_{\FT} R_{1} \sqsubseteq R_{2},\qquad
\Diamond_{\PT}R_{1} \sqsubseteq R_{2}.
\end{align*}
We say that such TBoxes are in \emph{normal form}.

\begin{theorem}\label{normal}
For every \TQL{} TBox $\mathcal{T}$, one can construct in polynomial time a \TQL{} TBox $\mathcal{T}'$ in normal form \textup{(}possibly containing additional concept and role names\textup{)} such that $\mathcal{T}'\models \mathcal{T}$ and, for every model $\mathcal{I}$ 
of $\mathcal{T}$, there exists a model of $\mathcal{T}'$ that coincides with $\mathcal{I}$ on all concept and role names in $\mathcal{T}$.
\end{theorem}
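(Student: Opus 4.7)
The plan is to use a standard structural transformation (flattening) in which each complex subconcept and subrole of $\T$ is named by a fresh surrogate, and each original axiom is replaced by several axioms of the allowed shape relating the surrogates. First I would break intersections of arbitrary arity on the left-hand side into binary ones in the usual way, by introducing $k-2$ fresh names and replacing $C_1 \sqcap \dots \sqcap C_k \sqsubseteq \beta$ with a cascade of inclusions $X_i \sqcap Y_i \sqsubseteq Z_i$. Next, for each non-basic proper subconcept $C$ that still occurs, introduce a fresh concept name $A_C$ together with the inclusion $C \sqsubseteq A_C$, and substitute $A_C$ for $C$ in the ambient context; processing subconcepts from the innermost outwards, every left-hand side eventually collapses into one of the permitted shapes $B_1 \sqcap B_2$, $\Diamond_{\FT} B_1$ or $\Diamond_{\PT} B_1$ with $B_i$ basic. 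The role case is strictly analogous, using fresh role names as surrogates for complex subroles, treating $\bot$ and $P_i^-$ as already basic.

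Correctness will follow from a straightforward induction on the depth of the introduced names. For $\T' \models \T$, I would argue by induction on $C$ (resp.\ $S$) that every model $\I$ of $\T'$ satisfies $C^{\I(n)} \subseteq A_C^{\I(n)}$ (resp.\ $S^{\I(n)} \subseteq Q_S^{\I(n)}$) for every $n$; plugging this into the normal-form axioms recovers each original inclusion. For the model-extension direction, given $\I \models \T$ I would interpret each fresh name by its definiens, setting $A_C^{\I(n)} := C^{\I(n)}$ and $Q_S^{\I(n)} := S^{\I(n)}$, processing names in the order they were introduced so that the definiens is already fully interpreted; by construction each new axiom then becomes an equality on both sides. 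The polynomial bound on $|\T'|$ is immediate, since each subconstituent of $\T$ contributes only a constant number of fresh symbols and axioms.

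The only slightly delicate point, and the main bit of bookkeeping, is the role case: one must check that choosing the surrogate $Q_S$ to be a genuine role name (hence basic, so $\Diamond_{\FT} Q_S \sqsubseteq R$ and $Q_S \sqcap R' \sqsubseteq R''$ remain in normal form) interacts correctly with inverses $P_i^-$ and with $\bot$, and that flattening $R_1 \sqcap \dots \sqcap R_k \sqsubseteq R$ really yields only axioms of the permitted shape $R_1 \sqcap R_2 \sqsubseteq R$. This is routine but has to be spelled out carefully; the remainder of the argument is a textbook application of structural normalisation.
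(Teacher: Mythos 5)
Your proposal is correct and is exactly the standard structural-transformation argument that the paper itself relies on (the paper states Theorem~\ref{normal} without giving an explicit proof): fresh surrogate names for complex subconcepts/subroles with defining inclusions $C \sqsubseteq A_C$, soundness via monotonicity of $\sqcap$ and $\Diamond_{\FT/\PT}$ on the left-hand side, and conservativity by interpreting each surrogate as its definiens. The only cosmetic point worth noting is that an already-basic inclusion $B_1 \sqsubseteq B_2$ must be recast as $B_1 \sqcap B_1 \sqsubseteq B_2$ to match the listed normal-form shapes, which is trivial.
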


Suppose now that we have a UCQ rewriting $\q'$ for a CQ $\q$ and the TBox $\T'$ in Theorem~\ref{normal}. We obtain a rewriting for $\q$ and $\T$ simply by removing from $\q'$ those CQs that contain symbols occurring in $\T'$ but not in $\T$. From now on, we assume that \emph{all \TQL{} TBoxes are in normal form}. The set of role names in $\T$ and with their inverses is denoted by $\roleT$, while $|\T|$ is the number of concept and role names in $\T$.

We begin the construction of rewritings by considering the case when all concept inclusions are of the form $C \sqsubseteq A_i$, so  existential quantification $\exists R$ does not occur in the right-hand side. \TQL{} TBoxes of this form will be called \emph{flat}. Note that RDFS statements can be expressed by means of flat TBoxes.

%*********************

\section{UCQ Rewriting for Flat TBoxes}\label{sec:flat}

Let $\K = (\T,\A)$ be a KB with a flat TBox $\T$ (in normal form). Our first aim is to construct a model $\C_{\K}$ of $\K$, called the \emph{canonical model}, for which the following theorem holds:
\begin{theorem}\label{can-flat}
For any consistent KB $\K = (\T,\A)$ with flat $\T$ and any CQ $\q(\vec{x}, \vec{s})$, we have $\K \models \q(\vec{a},\vec{n})$ iff $\C_\K  \models \q(\vec{a},\vec{n})$, for all tuples $\vec{a} \subseteq \ind(\A)$ and $\vec{n} \subseteq \mathbb Z$. % of appropriate length.
\end{theorem}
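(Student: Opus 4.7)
The plan is to define $\C_\K$ as the least model of $\K$ obtained by a monotone saturation procedure on $\A$ under $\T$, and then establish that $\C_\K$ is \emph{universal}: it satisfies $\K$, and every model of $\K$ admits a homomorphism from $\C_\K$ preserving all concept and role memberships. Flatness is essential here: since the right-hand side of every concept inclusion is a concept name, no fresh anonymous witnesses need to be generated, and the domain can simply be taken to be $\ind(\A)$ with $a^{\C_\K(n)} = a$ for every $a \in \ind(\A)$ and every $n \in \mathbb Z$.

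Concretely, I would initialise $A_i^{\C_\K(n)}$ and $P_i^{\C_\K(n)}$ from the ABox atoms and close iteratively under the normal-form axioms of $\T$: for $B_1 \sqcap B_2 \sqsubseteq B$ add $a$ to $B^{\C_\K(n)}$ whenever $a \in B_1^{\C_\K(n)} \cap B_2^{\C_\K(n)}$; for $\Diamond_{\PT} B_1 \sqsubseteq B_2$ add $a$ to $B_2^{\C_\K(n)}$ whenever $a \in B_1^{\C_\K(m)}$ for some $m < n$; analogously for $\Diamond_{\FT}$ and for role inclusions. The procedure is monotone, so a least fixpoint exists, and consistency of $\K$ prevents $\bot$ from ever being derived; hence $\C_\K \models \K$ holds by construction.

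The central step is a \textbf{universality lemma}: for every model $\I \models \K$, the map $a \mapsto a^{\I(n)}$ on $\ind(\A)$ (well-defined by rigidity of individual names and injective by the unique-name assumption) satisfies $a \in B^{\C_\K(n)} \Rightarrow a^{\I(n)} \in B^{\I(n)}$ for every basic concept $B$, and analogously for basic roles. This is proved by induction on the ordinal stage at which a membership is added to $\C_\K$: ABox atoms are preserved because $\I \models \A$, and each axiom application is preserved because $\I$ satisfies the corresponding inclusion.

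From universality the theorem follows easily. The forward direction $\K \models \q \Rightarrow \C_\K \models \q$ is immediate from $\C_\K \models \K$. For the converse, any satisfying assignment for $\q(\vec a,\vec n)$ in $\C_\K$ transfers to an arbitrary $\I \models \K$ via the homomorphism: concept and role atoms $A_i(\xi,\tau)$ and $P_i(\xi,\zeta,\tau)$ are preserved by the lemma, while the order atoms $(\tau < \sigma)$ and $(\tau = \sigma)$ are preserved automatically because every interpretation shares the common time flow $(\mathbb Z,<)$. The main subtlety I expect is that $\Diamond_{\PT/\FT}$-axioms propagate memberships to the infinitely many time points outside $\tc(\A)$, so the fixpoint is reached only after $\omega$ many stages; one has to verify that $\C_\K$ nevertheless stabilises into an eventually-periodic pattern on either side of $\tc(\A)$ — a structural fact which is harmless for this theorem but will become essential for the effective rewriting constructed in the next section.
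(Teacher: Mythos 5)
Your proposal is correct and follows essentially the same route as the paper: the paper defines $\C_\K$ over the domain $\ind(\A)$ as the least fixpoint $\op^{\infty}(\A^{\mathbb Z})$ of the closure rules \textbf{(ex)}, \textbf{(c1)}--\textbf{(c3)}, \textbf{(r1)}--\textbf{(r3)} and obtains the theorem from the standard universality argument (the canonical model satisfies $\K$ and maps homomorphically, identically on the temporal sort, into every model of $\K$), which the paper leaves implicit and you spell out. Your closing remark is also accurate: eventual periodicity is not needed here but is exactly what Lemma~\ref{lem:bound} supplies for the rewriting.
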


The construction uses a closure operator, $\op$, which applies the rules {\bf (ex)}, {\bf (c1)}--{\bf (c3)}, {\bf (r1)}--{\bf (r3)} below to a set, $\Smc$, of atoms of the form  $R(u,v,n)$, $A(u,n)$, $\exists R(u,n)$ or $(n < n')$; $\op(\Smc)$ is the result of (non-recursively) applying those rules to $\Smc$,  
\begin{equation*}
\op^{0}(\Smc) = \Smc, \qquad \op^{i+1}(\Smc)=\op(\op^{i}(\Smc)),
\qquad 
\op^{\infty}(\Smc) = \bigcup_{i \ge 0}\op^{i}(\Smc).
\end{equation*}
\begin{description}
\item[(ex)] If $R(u,v,n)\in \Smc$ then add $\exists R(u,n)$, $ \exists R^-(v,n)$ to $\Smc$;

\item[(c1)] if $(B_{1}\sqcap B_{2} \sqsubseteq B) \in \T$ and $B_{1}(u,n)$, $B_{2}(u,n) \in \Smc$, then add $B(u,n)$ to $\Smc$;

\item[(c2)] if $(\Diamond_{\PT}B\sqsubseteq B') \in \T$, $B(u,m)\in \Smc$ for some $m < n$ and $n$ occurs in $\Smc$, then add $B'(u,n)$ to $\Smc$;

\item[(c3)] if $(\Diamond_{\FT}B\sqsubseteq B') \in \T$, $B(u,m)\in \Smc$ for some $m > n$ and $n$ occurs in $\Smc$, then add $B'(u,n)$ to $\Smc$;

\item[(r1)] if $(R_{1}\sqcap R_{2} \sqsubseteq R) \in \T$ and $R_{1}(u,v,n)$, $R_{2}(u,v,n)$ are in $\Smc$, then add $R(u,v,n)$ to $\Smc$;

\item[(r2)] if $(\Diamond_{\PT}R\sqsubseteq R') \in \T$, $R(u,v,m)\in \Smc$ for some $m < n$ and $n$ occurs in $\Smc$, then add $R'(u,v,n)$ to $\Smc$;

\item[(r3)] if $(\Diamond_{\FT}R\sqsubseteq R') \in \T$, $R(u,v,m)\in \Smc$ for some $m > n$ and $n$ occurs in $\Smc$, then add $R'(u,v,n)$ to $\Smc$.
\end{description}
Note first that $\K = (\T,\A)$ is inconsistent iff $\bot \in \op^{\infty}(\A^{\mathbb Z})$. If $\K$ is consistent, we define the \emph{canonical model} $\C_\K$ of $\K$ by taking $\Delta^{\C_\K} = \ind(\A)$, $a \in A^{\C_\K(n)}$ iff $A(a,n) \in \op^{\infty}(\A^{\mathbb Z})$, and $(a,b) \in P^{\C_\K(n)}$ iff $P(a,b,n) \in \op^{\infty}(\A^{\mathbb Z})$, for $n \in \mathbb Z$. (As $\T$ is flat, atoms of the form $\exists R(u,n)$ can only be added by ({\bf ex}).) This gives us Theorem~\ref{can-flat}.
The following lemma shows that to construct $\C_\K$ we actually need only a bounded number of applications of $\op$ which does not depend on $\A$:  
\begin{lemma}\label{lem:bound}
Suppose $\T$ is a flat TBox, let $n_\T= (4 \cdot |\T|)^{4}$. Then
$\op^{\infty}(\A^{\mathbb Z}) = \op^{n_\T}(\A^{\mathbb Z})$, for any ABox $\A$. 
\end{lemma}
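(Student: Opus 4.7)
The strategy is to bound the number of rounds of $\op$ needed to reach the fixed point $\op^\infty(\A^{\mathbb Z})$ uniformly in $\A$. The decisive property of the rules is \emph{burst propagation}: each of \textbf{(c2)}, \textbf{(c3)}, \textbf{(r2)}, \textbf{(r3)} takes a single witness $B(u,m)$ (or $R(u,v,m)$) and, in one application of $\op$, produces the derived atom at \emph{every} time on an entire semi-infinite ray $\{n : n > m\}$ or $\{n : n < m\}$. Thus the potentially infinite content of $\op^\infty$ is generated as soon as witnesses exist, and all subsequent progress is confined to the finite combinatorics of the $O(|\T|)$ basic concepts and roles of $\T$.

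Concretely, for each $u \in \ind(\A)$ and each basic $B$ (and analogously for pairs $(u,v)$ and basic roles) I would track the monotone set $T_{u,B}^i = \{n : B(u,n) \in \op^i(\A^{\mathbb Z})\}$ and observe that after the first round it decomposes as a past ray $\{n : n < a_{u,B}^i\}$, a future ray $\{n : n > b_{u,B}^i\}$, and a finite ``middle'' $M_{u,B}^i$ arising from ABox atoms together with intersections via \textbf{(c1)}/\textbf{(r1)}. The \emph{signature} of $u$ at round $i$ is the tuple of thresholds $(a_{u,B}^i, b_{u,B}^i)_B$ together with the family of subsets $\{B : n \in M_{u,B}^i\}$ as $n$ varies over middle times. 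Distinct middle times sharing the same co-occurrence type trigger the same derivations, so this signature lives in a space whose size is polynomial in $|\T|$ and, crucially, \emph{independent of $\A$}.

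The core argument is then a potential function: any round that strictly enlarges $\op^i(\A^{\mathbb Z})$ must strictly enlarge the signature at some $u$ (or some pair), because such a round either advances a past or future threshold, activates a previously empty $T_{u,B}$ (possibly via \textbf{(ex)}), or introduces a new co-occurrence type by \textbf{(c1)}/\textbf{(r1)}. Each of the four dimensions---past thresholds, future thresholds, concept co-occurrences, role co-occurrences---can change at most $|\T|$ times through at most $|\T|$ cascading layers, yielding the envelope $(4\cdot|\T|)^4$. Since different individuals interact only through \textbf{(ex)}, which is a one-step lifting from roles to concepts, the bound is a maximum over individuals rather than a sum, which is what keeps it independent of $|\A|$.

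The main obstacle will be controlling the feedback between \textbf{(c1)}/\textbf{(r1)} and the burst rules: a new intersection produced at a middle time $n$ can extend the ray of a different concept via \textbf{(c2)}, which can in turn spawn further intersections at yet other times, and so on. This feedback must terminate because every complete cycle strictly enlarges either the family of co-occurrence types at $u$ or the set of basic concepts ever appearing at $u$, both of which are capped by $|\T|$; verifying by a careful case analysis that the nested cycles compose within the stated polynomial envelope is the only technically delicate step.
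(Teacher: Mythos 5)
Your overall architecture is close to the paper's: you exploit that \textbf{(c2)}, \textbf{(c3)}, \textbf{(r2)}, \textbf{(r3)} saturate semi-infinite rays in one application, that \textbf{(c1)}/\textbf{(r1)} are local in time, that distinct individuals (resp.\ pairs) interact only through the one-shot rule \textbf{(ex)} so the bound is a maximum over individuals, and you then try to cap the number of productive rounds by a potential function on a per-individual ``signature.'' The gap is in the potential function itself, which is exactly the step you flag as delicate and then dispose of too quickly. Two problems. First, the signature component recording the family of co-occurrence types $\{B : n \in M^i_{u,B}\}$ ranges over families of subsets of a set of size $O(|\T|)$, so a strictly increasing chain of signatures can have length $2^{\Omega(|\T|)}$; a potential with exponential range cannot by itself yield $(4|\T|)^4$. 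Second, and more importantly, it is not true that every productive round advances one of your four dimensions in a way capped by $|\T|$: a round can merely shift a threshold $a^i_{u,B}$ or $b^i_{u,B}$ without introducing any new basic concept at $u$ and without creating a new co-occurrence type, and a single threshold can move repeatedly as the past-directed and future-directed rules feed each other. Bounding the number of these threshold movements is the whole difficulty, and your claim that ``every complete cycle strictly enlarges either the family of co-occurrence types at $u$ or the set of basic concepts ever appearing at $u$'' does not hold as stated.

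The paper closes exactly this hole with a sharper monotone invariant of polynomial range: a \emph{cross over} is a pair of roles $(R_1,R_2)$ for which there exist $m_1+1 \ge m_2$ with $R_1(u,v,n)$ present for all $n \le m_1$ and $R_2(u,v,n)$ for all $n \ge m_2$. It first shows that the purely past-directed set $\{\textbf{(r1)},\textbf{(r2)}\}$ and the purely future-directed set $\{\textbf{(r1)},\textbf{(r3)}\}$ each stabilise within $|\roleT|$ rounds, and then proves that every alternation between exhaustive applications of these two sets that still makes progress creates a cross over that did not exist before. As there are at most $|\roleT|^2$ cross overs, at most $|\roleT|^2$ alternations occur, giving $4|\roleT|^4$ for the role rules and, symmetrically, the analogous bound for the concept rules. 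To repair your argument you would need to replace the co-occurrence component of your signature by a quantity of this kind: a set of ``covering pairs'' of size at most $|\T|^2$ that provably acquires a new element at each past/future alternation.
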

\begin{proof} 
It is not hard to see that $\op^{\infty}(\mathcal{S})$ can be obtained by first exhaustively applying \textbf{(r1)}--\textbf{(r3)}, then \textbf{(ex)}, and after that \textbf{(c1)}--\textbf{(c3)}. Since no recursion of \textbf{(ex)} is needed, it is sufficient to bound the recursion depth for applications of \textbf{(r1)}--\textbf{(r3)} and \textbf{(c1)}--\textbf{(c3)} separately. As both behave similarly, we focus on \textbf{(r1)}--\textbf{(r3)}. One can show that it is enough to consider ABoxes with two individuals, say $a$ and $b$, and it is not difficult to
find a bound for the recursion depth of the separated rule sets \textbf{(r1)}, \textbf{(r2)} and,
respectively, \textbf{(r1)}, \textbf{(r3)}; the interesting part of the analysis is how often
one has to alternate between applications of \textbf{(r1)}, \textbf{(r2)} and applications of 
\textbf{(r1)}, \textbf{(r3)}. The key observation here is that each alternation introduces
a fresh \emph{cross over} (i.e., a pair $(R_{1},R_{2})$ of roles such that there are $m_{1},m_{2}\in \mathbb{Z}$
with $m_{1}+1 \geq m_{2}$, $R_{1}(a,b,n)\in \mathcal{S}$
for all $n\leq m_{1}$, and $R_{2}(a,b,n)\in \mathcal{S}$ for all $n\geq m_{2}$). The number of such 
cross overs is bounded by $|\T|^{2}$, and so the number of required alternations between exhaustively 
applying \textbf{(r1)}, \textbf{(r2)}
and \textbf{(r1)}, \textbf{(r3)} is bounded by $|\T|^{2}$.
\end{proof}

We now use Lemma~\ref{lem:bound} to construct a rewriting for any flat TBox $\T$ and CQ $\q(\vec{x},\vec{s})$. For a concept $C$ and a role $S$, denote by 
$C^\sharp$ and $S^\sharp$ their standard FO-translations: e.g., $(\Diamond_{\FT} A)^\sharp (\xi,\tau) = \exists t \, ((\tau < t) \land A(\xi, t))$ and $(\exists R)^\sharp (\xi,\tau) = \exists y \, R(\xi, y, \tau)$.
Now, given a PEQ $\varphi$, we set $\varphi^{0\downarrow} = \varphi$ and define, inductively, $\varphi^{(n+1)\downarrow}$ as the result of replacing every
\begin{itemize}
\item[--] $A(\xi,\tau)$ with $A(\xi,\tau) \vee \bigvee_{(C \sqsubseteq A) \in \T}(C^{\sharp}(\xi,\tau))^{n\downarrow}$,

\item[--] $P(\xi,\zeta,\tau)$ with $P(\xi,\zeta,\tau) \vee \bigvee_{(S \sqsubseteq P) \in \T} (S^{\sharp}(\xi,\zeta,\tau))^{n\downarrow}$.
\end{itemize}
Finally, we set 
\begin{equation*}
\ext^\mathcal{T}_{\q}(\vec{x},\vec{s}) = (\q(\vec{x},\vec{s}))^{n_\T\downarrow}.
\end{equation*}
Clearly, $\ext^\mathcal{T}_{\q}(\vec{x},\vec{s})$ is a PEQ, and so can be equivalently transformed into a UCQ. Denote by $\T^\bot$ the result of replacing $\bot$ with a fresh concept name, say $F$, in all concept inclusions and with a fresh role name, say $Q$, in all role inclusions of $\T$. Clearly $(\T^\bot,\A)$ is consistent for any ABox $\A$. Let 
$\q^\bot = (\exists x,t \, F(x,t)) \lor (\exists x,y,t \, Q(x,y,t))$. 
By Theorem~\ref{can-flat} and Lemma~\ref{lem:bound}, we obtain:
\begin{theorem}\label{th:flat0}
Let $\T$ be a flat TBox and $\q(\vec{x}, \vec{s})$ a CQ. Then, for any consistent KB $(\T,\A)$, any $\vec{a}\subseteq \ind(\A)$ and $\vec{n}\subseteq \mathbb Z$,
$$
(\T,\A) \models \q(\vec{a},\vec{n}) \quad \text{iff} \quad 
\A^{\mathbb Z} \models \ext^\mathcal{T}_{\q}(\vec{a},\vec{n}).
$$
$(\T,\A)$ is inconsistent iff $(\T^\bot,\A) \models \q^\bot$. 
\end{theorem}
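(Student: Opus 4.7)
The plan is to combine Theorem~\ref{can-flat}, Lemma~\ref{lem:bound}, and an induction coupling the closure rules with the rewriting expansions. For the inconsistency claim, I would argue that $\T^\bot$ simulates $\T$ on any $\A$ step for step, the only difference being that the atoms $\bot(u,n)$ or $\bot(u,v,n)$ that the closure would add under $\T$ become $F(u,n)$ or $Q(u,v,n)$ under $\T^\bot$. Since $(\T^\bot,\A)$ contains no $\bot$ and is therefore always consistent, Theorem~\ref{can-flat} applies to it, so $(\T^\bot,\A)\models\q^\bot$ holds iff $\C_{(\T^\bot,\A)}$ contains such a witness atom, iff $\bot\in\op^\infty(\A^{\mathbb Z})$ under $\T$, iff $(\T,\A)$ is inconsistent.

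For the main equivalence, assume $(\T,\A)$ is consistent. Theorem~\ref{can-flat} reduces $(\T,\A)\models\q(\vec{a},\vec{n})$ to the existence of a homomorphism $h$ from $\q$ into $\C_\K$ that sends individual variables into $\ind(\A)=\Delta^{\C_\K}$ and temporal variables into $\mathbb Z$ while respecting the $<,=$ atoms of $\q$. By Lemma~\ref{lem:bound}, $\C_\K$ is fully determined by $\op^{n_\T}(\A^{\mathbb Z})$, so it suffices to prove, by induction on $k\le n_\T$, the correspondence
\begin{align*}
A(a,n)\in\op^k(\A^{\mathbb Z}) \ &\iff\ \A^{\mathbb Z}\models (A(\xi,\tau))^{k\downarrow}[a,n],\\
P(a,b,n)\in\op^k(\A^{\mathbb Z}) \ &\iff\ \A^{\mathbb Z}\models (P(\xi,\zeta,\tau))^{k\downarrow}[a,b,n],
\end{align*}
plus the analogue for the basic concept $\exists R$, whose translation $(\exists R)^\sharp(\xi,\tau)=\exists y\,R(\xi,y,\tau)$ directly mirrors rule \textbf{(ex)}. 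The base $k=0$ is immediate since $\op^0(\A^{\mathbb Z})=\A^{\mathbb Z}$ and $(\cdot)^{0\downarrow}$ is the identity. In the inductive step, each firing of \textbf{(c1)}--\textbf{(c3)} that adds $A(a,n)$ at stage $k+1$ from some $(C\sqsubseteq A)\in\T$ is matched by exactly one disjunct $(C^\sharp(\xi,\tau))^{k\downarrow}$ in the unfolding $A(\xi,\tau)^{(k+1)\downarrow}$: conjunction in $C_1\sqcap C_2$ matches \textbf{(c1)}, and the existentially quantified temporal inequalities in $(\Diamond_{\PT}C)^\sharp$ and $(\Diamond_{\FT}C)^\sharp$ match \textbf{(c2)}, \textbf{(c3)}, with the induction hypothesis applied to the basic subconcepts of $C$ closing both directions. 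An entirely analogous argument handles role atoms via \textbf{(r1)}--\textbf{(r3)}. Reading the correspondence at $k=n_\T$ atomwise along $h$ yields the required equivalence between $\C_\K\models\q(\vec{a},\vec{n})$ and $\A^{\mathbb Z}\models\ext^\mathcal{T}_\q(\vec{a},\vec{n})$.

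The hard part will be the inductive bookkeeping: tracking concept names, role names and the derived basic concepts $\exists R$ simultaneously across closure levels, and aligning the existential witnesses supplied by $(\exists R)^\sharp$ and by the temporal-diamond translations with their counterparts in the closure derivations in both directions. A minor but important sanity check is that the evaluation of the rewriting takes place over $\A^{\mathbb Z}$ rather than $[\A]$: the order $<$ is available over all of $\mathbb Z$, and the bound $n_\T$ from Lemma~\ref{lem:bound} is independent of $\A$, so the temporal witnesses introduced during the unfolding can always be chosen, even when they fall outside $\tc(\A)$.
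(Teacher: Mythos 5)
Your proposal is correct and follows exactly the route the paper intends: the paper derives Theorem~\ref{th:flat0} directly from Theorem~\ref{can-flat} and Lemma~\ref{lem:bound} without spelling out the details, and your level-by-level induction matching $\op^k(\A^{\mathbb Z})$ against the $k$-fold unfolding $(\cdot)^{k\downarrow}$, together with the observation that $(\T^\bot,\A)$ is always consistent and its canonical model contains an $F$- or $Q$-atom exactly when $\bot\in\op^{\infty}(\A^{\mathbb Z})$ under $\T$, is the natural way to fill them in. The points you flag as remaining bookkeeping (the one-step offset for \textbf{(ex)} versus the unfolding of $(\exists R)^\sharp$, and evaluation over $\A^{\mathbb Z}$ rather than $[\A]$) are real but harmless, since $n_\T$ generously over-approximates the closure depth.
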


Thus, we obtain a rewriting for $\q$ and $\T$ using Lemma~\ref{reduction}.

%*********************

\section{Canonical Models for Arbitrary TBoxes}
Canonical models for consistent KBs $\mathcal{K} = (\mathcal{T},\mathcal{A})$ with not necessarily flat TBoxes $\T$ (in normal form) can be constructed starting from $\A^{\mathbb Z}$ and using the   rules given in the previous section together with the following one: 
\begin{description}
\item[$(\leadsto)$] if $\exists R(u,n) \in \Smc$ and $R(u,v,n) \notin \Smc$ for any $v$, then add $R(u,v,n)$ to $\Smc$, for some fresh individual name $v$; in this case we write $u \leadsto^n_R v$. 
\end{description}
Denote by $\op_1$ the closure operator under the resulting 8 rules. Again, $\K$ is inconsistent iff $\bot \in \op_1^{\infty}(\A^{\mathbb Z})$. If $\K$ is consistent, we define the \emph{canonical model} $\C_\K$ for $\K$ by the set $\op_1^{\infty}(\A^{\mathbb Z})$ in the same way as in Section~\ref{sec:flat} but taking the domain $\Delta^{\C_\K}$ to contain all the individual names in $\op_1^{\infty}(\A^{\mathbb Z})$. 
\begin{theorem}\label{thm:canonical}
For every consistent $\mathcal{K} = (\T, \A)$ and every CQ $\q(\vec{x}, \vec{s})$, we have $\mathcal{K} \models \q(\vec{a}, \vec{n})$ iff $\mathcal{C}_\mathcal{K} \models \q(\vec{a}, \vec{n})$, for any tuples $\vec{a} \subseteq \ind(\A)$ and $\vec{n} \subseteq \mathbb Z$. 
\end{theorem}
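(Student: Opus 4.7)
My plan is to prove the equivalence by the standard ``universal model'' strategy. For the right-to-left direction I would first verify that $\C_\K$ really is a model of $\K$. The inclusion $\A\subseteq\op_1^\infty(\A^\mathbb Z)$ gives $\C_\K\models\A$ immediately, and each of the eight closure rules mirrors exactly one normal-form inclusion: for instance, if $(\Diamond_{\PT}B\sqsubseteq B')\in\T$ and $u\in(\Diamond_{\PT}B)^{\C_\K(n)}$, then $B(u,m)\in\op_1^\infty(\A^\mathbb Z)$ for some $m<n$, and rule \textbf{(c2)} produces $B'(u,n)$; similarly for role inclusions via \textbf{(r1)}--\textbf{(r3)}, and rule $(\leadsto)$ witnesses the $\exists R$ atoms generated by \textbf{(ex)}. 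Hence $\K\models\q(\vec a,\vec n)$ trivially implies $\C_\K\models\q(\vec a,\vec n)$.

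The interesting direction is the other one. I would establish that $\C_\K$ is a \emph{universal} model: for every $\I\models\K$ there is a two-sorted map $h\colon\Delta^{\C_\K}\to\Delta^\I$, identity on the time sort, with $h(a)=a^{\I(0)}$ for $a\in\ind(\A)$, such that $A(u,n)\in\op_1^\infty(\A^\mathbb Z)$ entails $h(u)\in A^{\I(n)}$ and $P(u,v,n)\in\op_1^\infty(\A^\mathbb Z)$ entails $(h(u),h(v))\in P^{\I(n)}$. Because CQs are preserved by such homomorphisms and the order atoms $(\tau<\sigma)$ live in the common time structure $(\mathbb Z,<)$, the existence of $h$ yields $\C_\K\models\q(\vec a,\vec n)\Rightarrow\I\models\q(\vec a,\vec n)$, and hence $\K\models\q(\vec a,\vec n)$. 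To construct $h$, I would induct on the stage at which domain elements enter $\op_1^\infty(\A^\mathbb Z)$: for ABox individuals set $h(a)=a^{\I(0)}$; when rule $(\leadsto)$ adds a fresh $v$ via $u\leadsto^n_R v$, use the inductive fact that $h(u)\in(\exists R)^{\I(n)}$ to pick a witness $v'\in\Delta^\I$ with $(h(u),v')\in R^{\I(n)}$ and set $h(v):=v'$. A further inner induction on derivation length then shows that every subsequent atom about $v$ produced by \textbf{(ex)}, \textbf{(c1)}--\textbf{(c3)} and \textbf{(r1)}--\textbf{(r3)} holds in $\I$ under $h$, because each step simply applies one TBox inclusion in $\I$ to a previously established atom.

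The chief obstacle is the interplay between $(\leadsto)$ and the temporal role rules \textbf{(r2)}, \textbf{(r3)}: the single pair $(u,v)$ born at time $n$ can be pulled into many roles at many other time points via axioms $\Diamond_{\PT}R\sqsubseteq R'$ and $\Diamond_{\FT}R\sqsubseteq R'$, and those in turn trigger fresh concept atoms on $v$ via \textbf{(ex)}, \textbf{(c2)}, \textbf{(c3)}, potentially spawning further existentials rooted at $v$ --- so one must check that a single witness $v'$ can bear all of these obligations in $\I$ simultaneously. What makes the local choice of $v'$ adequate is the normal-form restriction that only \emph{basic} concepts and roles occur on the right-hand sides of TBox inclusions: each derived atom follows from exactly one previously derived atom plus one TBox inclusion, so the obligations on $v'$ reduce one by one to semantic consequences of $\I\models\T$ and of the single seed atom $(h(u),v')\in R^{\I(n)}$, with no global coherence constraint that would force us to choose $v'$ cleverly.
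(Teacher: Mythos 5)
Your proposal is correct: the right-to-left direction follows because $\C_\K$ is a model of $\K$ (each closure rule exactly internalises one normal-form inclusion, and $(\leadsto)$ realises the $\exists R$ obligations), and the left-to-right direction follows from universality, i.e.\ the homomorphism $h\colon\C_\K\to\I$ built by induction on the stages of $\op_1$, using that the $\leadsto$-structure is a forest and that every atom about a fresh witness is derived from its single seed atom by TBox inclusions, so one witness in $\I$ carries all obligations. This is precisely the standard canonical-model argument the paper relies on but does not spell out (no proof of this theorem appears in the appendix), so your write-up fills that gap rather than diverging from it.
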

\begin{example}\label{example1}\em
Let $\K = (\T, \A)$ with $\A = \{A(a,0)\}$ and
$$
\T \ \ = \ \ \{A \sqsubseteq \exists R,  \ \Diamond_{\PT} R \sqsubseteq Q,   \ \exists Q^- \sqsubseteq \exists S, \ \Diamond_{\PT} Q \sqsubseteq P,  \ \Diamond_{\PT} S \sqsubseteq S' \}.
$$
A fragment of the model $\mathcal{C}_\K$ is shown in the picture below:
\begin{center}
\begin{tikzpicture}[>=latex, yscale=0.75, point/.style={circle,fill=white,draw=black,minimum size=1.3mm,inner sep=0pt}, wiggly/.style={thick,decorate,decoration={snake,amplitude=0.3mm,segment length=2mm,post length=1mm}},
time/.style={thick},
tw/.style={shorten <= 0.1cm, shorten >= 0.1cm,dashed}]\footnotesize
\foreach \y in {0,1,2,2.5} 
{
\draw[draw=gray] (-3.5,\y) -- ++(5.5,0);
\draw[draw=gray,dashed] (-4,\y) -- ++(0.5,0);
\draw[draw=gray,dashed] (2,\y) -- ++(0.5,0);
}
\node (a) at (-4.5,0) {$a$};
%\node (a-inf) at (-3,0) {$\cdots$};
%\node (ainf) at (3.5,0) {$\cdots$};
\node (a-1) at (-3,0) [point, label=below:{$0$}, label=above right:{$A$}]{}; % label=left:{$\exists R$}, 
\node (a0) at (-1,0) [point, label=below:{$1$}]{}; % , label=right:{$\exists Q$}
\node (a1) at (1.5,0) [point, label=below:{$2$}]{}; % , label=right:{$\exists Q, \exists P$}

\node (c) at (-4.5,1) {$v$};
%\node (c-inf) at (-3,1) {$\cdots$};
%\node (cinf) at (3.5,1) {$\cdots$};
\node (c-1) at (-3,1) [point]{}; % , label=right:{$\exists R^-$}
\node (c0) at (-1,1) [point]{}; % , label=left:{$\exists S$}, label=right:{$\exists Q^-$}
\node (c1) at (1.5,1) [point]{}; % , label=right:{$\exists Q^-,\exists P^-$}, label=left:{$\exists S, \exists S'$}

\node (d-1) at (-4.5,2) {$u_1$};
%\node (d-1-inf) at (-3,2) {$\cdots$};
%\node (d-1inf) at (3.5,2) {$\cdots$};
\node (d-1-1) at (-3,2) [point]{};
\node (d-10) at (-1,2) [point]{}; % , label=right:{$\exists S^-$}
\node (d-11) at (1.5,2) [point]{};

\node (d0) at (-4.5,2.5) {$u_2$};
%\node (d0-inf) at (-3,3) {$\cdots$};
%\node (d0inf) at (3.5,3) {$\cdots$};
\node (d0-1) at (-3,2.5) [point]{};
\node (d00) at (-1,2.5) [point]{};
\node (d01) at (1.5,2.5) [point]{}; %, label=right:{$\exists S^-$}

%\node (d1) at (-4,4) {$d_{1}$};
%\node (d1-inf) at (-3,4) {$\cdots$};
%\node (d1inf) at (3.5,4) {$\cdots$};
%\node (d1-1) at (-2,4) [point]{};
%\node (d10) at (0,4) [point]{};
%\node (d11) at (2,4) [point, label=right:{$\exists S^-$}]{};

\draw[->,wiggly] (a-1)  to node [left]{$R$} (c-1);
\draw[->,thick] (a0)  to node [left]{$Q$} (c0);
\draw[->,thick] (a1)  to node [left]{$Q$} node[right] {$P$} (c1);

%\draw[->,wiggly] (c-1)  to node [left]{$S$} (d-1-1);
\draw[->,wiggly] (c0)  to node [left]{$S$} (d-10);
\draw[->,thick] (c1)  to node [right]{$S'$} (d-11);
\draw[->,wiggly,bend left,looseness=1.2] (c1)  to node [left]{$S$} (d01);
\end{tikzpicture}
\end{center}
\end{example}

We say that the individuals $a \in \ind (\A)$ are of \emph{depth} $0$ in $\C_\K$; now, if $u$ is of depth $d$ in $\C_\K$ and $u\leadsto^n_R v$, for some $n \in \mathbb Z$ and $R$, then $v$ is of \emph{depth} $d+1$ in $\C_\K$. 
Thus, both $u_1$ and $u_2$ in Example~\ref{example1} are of depth 2 and $v$ is of depth 1. The restriction of $\C_\K$, treated as a set of atoms, to the individual names of depth $\le d$ is denoted by $\C_\K^d$. Note that this set is not necessarily closed under the rule $(\leadsto)$.

In the remainder of this section, we describe the structure of $\C_\K$, which is required for the rewriting in the next section. 
We split $\C_\K$ into two parts: one consists of the elements in $\ind(\A)$, while the other contains the fresh individuals introduced by $(\leadsto)$. As this rule always uses \emph{fresh} individuals, to understand the structure of the latter part it is enough to consider KBs of the form $\K_\T^R = (\T \cup \{A \sqsubseteq \exists R\}, \{A(a,0)\})$ with fresh $A$. 
%and describe the behaviour of concepts and roles on an individual $u$ of depth 1. 
We begin by analysing the behaviour of the atoms $R'(a,u,n)$ entailed by $R(a,u,0)$, where $a \leadsto^0_{R} u$. 
\begin{lemma}[monotonicity]\label{lem:kk}
Suppose $a \leadsto^0_{R} u$ in $\CKTR$. If either $m < n < 0$ or $0< n < m$, then $R'(a,u,n)\in \CKTR$ implies $R'(a,u,m)\in \CKTR$\textup{;} moreover, if  \mbox{$n < m = - |\roleT|$} or $|\roleT|= m  < n$, then $R'(a,u,n)\in \CKTR$ iff \mbox{$R'(a,u,m)\in \CKTR$}.
\end{lemma}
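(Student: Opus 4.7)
The plan is to prove both halves of the lemma by strong induction on the stage of the canonical closure construction, using the past/future symmetry between rules \textbf{(r2)} and \textbf{(r3)} to reduce to the future direction $0 < n < m$. Write $\Gamma_n = \{R' \in \roleT \mid R'(a,u,n) \in \CKTR\}$. Because $u$ is a fresh individual introduced by $(\leadsto)$ from $a$ at time $0$, no rule other than \textbf{(r1)}, \textbf{(r2)}, \textbf{(r3)} can produce any atom of the form $R''(a,u,n)$, so the behaviour of $\Gamma_n$ is determined purely by these three rules starting from the seed $R(a,u,0)$; the concept rules and descendants of $u$ are irrelevant.

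For the monotonicity half, I induct on the least stage $i$ with $R'(a,u,n) \in \op_1^{i}(\A^{\mathbb Z})$, taking as inductive hypothesis the following uniform statement: for every $j < i$, every basic role $R''$, and all $0 < n' < m'$, if $R''(a,u,n') \in \op_1^{j}(\A^{\mathbb Z})$ then $R''(a,u,m') \in \CKTR$. A case analysis on the rule firing at stage $i$ disposes of \textbf{(r1)} (apply IH to both conjuncts at the shifted time $m$) and of \textbf{(r2)} (the past witness at $k < n$ is also $< m$, so the rule fires at $m$ unchanged). The delicate case is \textbf{(r3)}, where $R'(a,u,n)$ is obtained from some $R(a,u,k)$ with $k > n$: if $k > m$, reapply \textbf{(r3)} at $m$; otherwise $n < k \le m$, and I invoke the IH on $R(a,u,k)$ at the \emph{shifted} target $m+1$ (valid because $0 < k < m+1$) to obtain $R(a,u,m+1) \in \CKTR$, whence \textbf{(r3)} gives $R'(a,u,m)$.

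For the stabilization ``iff'' at $m = |\roleT|$, I first establish a propagation claim: $\Gamma_n = \Gamma_{n+1}$ implies $\Gamma_{n+1} = \Gamma_{n+2}$. The $\subseteq$ direction is monotonicity; for the reverse I induct on the stage at which an atom of $\Gamma_{n+2}$ is produced. The only case needing the hypothesis is \textbf{(r2)} with a past witness at exactly time $n+1$: that witness lies in $\Gamma_{n+1} = \Gamma_n$, so applying the same axiom to the witness at time $n$ derives the atom at time $n+1$; \textbf{(r3)} trivially shifts (a witness at $l > n+2$ also satisfies $l > n+1$) and \textbf{(r1)} is local. Given propagation, the monotone chain $\Gamma_1 \subseteq \Gamma_2 \subseteq \cdots$ becomes constant from its first equality onwards, and since each strict inclusion adds a role while $|\Gamma_n| \le |\roleT|$, stabilization occurs by index $|\roleT|$, yielding the claimed iff for $n > |\roleT|$; the past direction is entirely symmetric.

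The main obstacle is the \textbf{(r3)} subcase in the monotonicity induction where the future witness sits in the half-open interval $(n, m]$ and cannot be reused directly. The resolution is the uniform quantification over targets in the IH: it allows us to push the witness out past $m$ before applying \textbf{(r3)} one final time, so the induction closes without any new machinery. Everything else -- the base cases, the other rule cases, and the counting for stabilization -- is routine.
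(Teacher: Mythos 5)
Your overall strategy is sound, and for the monotonicity half it is essentially the paper's own proof: the paper also argues by induction over rule applications, merely asserting that \textbf{(r1)}--\textbf{(r3)} ``clearly preserve'' the property, whereas you supply the case analysis it omits. In particular, your observation that the inductive hypothesis must be quantified uniformly over all targets $m'$, so that an \textbf{(r3)} witness lying in the interval $(n,m]$ can first be pushed out to $m+1$ before the rule is re-applied at $m$, is exactly the point that makes that induction close; your reduction to the pair $(a,u)$ and to the three role rules is also the same locality observation the paper uses (its Fact~2).

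For the stabilisation half you genuinely diverge from the paper. The paper proves a cardinality claim: if some $R'$ holds at $\ell$ but not at $\ell-1$ (a ``jump'' at $\ell>1$), then at least $\ell$ distinct roles already hold at time $\ell$; since there are at most $|\roleT|$ roles, jumps cannot occur beyond $\ell=|\roleT|$, and Item~2 follows with Item~1. You instead prove a propagation claim ($\Gamma_n=\Gamma_{n+1}$ implies $\Gamma_{n+1}=\Gamma_{n+2}$) and count strict inclusions in the chain $\Gamma_1\subseteq\Gamma_2\subseteq\cdots$. The propagation claim itself is correct (the one delicate case, an \textbf{(r2)} witness sitting exactly at $n+1$, is handled properly). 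But your counting has an off-by-one: if the first equality occurs at $\Gamma_k=\Gamma_{k+1}$, the strict chain only yields $|\Gamma_k|\ge|\Gamma_1|+k-1$, so when $\Gamma_1=\emptyset$ you obtain only $k\le|\roleT|+1$, i.e.\ stabilisation from index $|\roleT|+1$ rather than the $|\roleT|$ the lemma states (and which is used later, e.g.\ in the CoNF inclusions and Lemma~\ref{l:anonym}). The gap closes with your own technique: if $\Gamma_1=\emptyset$, every derivation of an atom at time $2$ transfers to time $1$ (an \textbf{(r2)} witness at time $1$ is impossible since it would lie in the empty $\Gamma_1$), so $\Gamma_2=\emptyset$ and the chain is constant from $1$ onwards. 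The paper's jump-counting invariant sidesteps this corner case automatically, which is what it buys over your route; your route is otherwise arguably the more transparent of the two.
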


The atoms $R'(a,u,n)$ entailed by $R(a,u,0)$ in $\CKTR$ via {\bf (r1)}--{\bf (r3)}, also have an impact, via {\bf (ex)}, on the atoms of the form $B(a,n)$ and $B(u,n)$ in $\CKTR$. Thus, in Example~\ref{example1}, $R(a,v,0)$ entails $\exists Q(a,n)$, for $n > 0$.  To analyse the behaviour of such  atoms, it is helpful to assume that $\T$ is in \emph{concept normal form} (CoNF) in the following sense: for every role $R \in \roleT$, the TBox $\T$ contains 
\begin{align*}
& \exists R \sqsubseteq A^0_R, && \Diamond_{\FT}\exists R \sqsubseteq A^{-1}_{R}, && \Diamond_{\FT}A^{-m}_{R} \sqsubseteq A^{-m-1}_{R}, \\
&&& \Diamond_{\PT}\exists R \sqsubseteq A^1_{R}, && \Diamond_{\PT}A^m_{R} \sqsubseteq A^{m+1}_{R},
\end{align*}
for $0 \le m \leq |\roleT|$ and some concepts $A^i_{R}$, and 
\begin{equation*}
A^{m}_{R}\sqsubseteq \exists R', \text{ for } |m| \leq |\roleT| \text{ and } R'(a,v,m) \in \CKTR. %,\\
\end{equation*}
%%
%& A^m_{R}\sqsubseteq \exists S, \qquad \text{if } S(a,v,m) \in \CKTR.
%\end{align*}
%
\centerline{%
\begin{tikzpicture}[>=latex, point/.style={circle,fill=white,draw=black,minimum size=1.5mm,inner sep=0pt},xscale=1.25,yscale=0.8]\footnotesize
\clip (-3.7,-0.6) rectangle ++(7.4,1.7);
\begin{scope}[rounded corners=4pt]
\draw[fill=gray!50] (-5,-0.5) rectangle ++(4.5,1.5); 
\draw[fill=gray!30] (-5,-0.4) rectangle ++(3.5,1.3); 
\draw[fill=gray!10] (-5,-0.3) rectangle ++(2.5,1.1); 
\draw[fill=gray!50] (5,-0.5) rectangle ++(-4.5,1.5); 
\draw[fill=gray!30] (5,-0.4) rectangle ++(-3.5,1.3); 
\draw[fill=gray!10] (5,-0.3) rectangle ++(-2.5,1.1); 
\end{scope}
\draw[thick] (-5,0) -- ++(10,0);
\node [point,label=below:{$\exists R$},label=above:{$A_R^0$}] at (0,0) {};
\node [point,label=above:{$A_R^{-1}$}] at (-1,0) {};
\node [point,label=above:{$A_R^{-2}$}] at (-2,0) {};
\node [point,label=above:{$A_R^{-3}$}] at (-3,0) {};
\node [point,label=above:{$A_R^{1}$}] at (1,0) {};
\node [point,label=above:{$A_R^{2}$}] at (2,0) {};
\node [point,label=above:{$A_R^{3}$}] at (3,0) {};
\end{tikzpicture}%
}\\
(In Example~\ref{example1}, $\C_\K$ will contain the atoms $A^1_R(a,n)$ and $A^2_R(a,n+1)$, for $n\ge 1$.) By Lemma~\ref{lem:kk}, if $\T$ is in CoNF, then we  can compute the atoms $B(a,n)$ and $B(u,n)$ in $\CKTR$ without using the rules {\bf (r1)}--{\bf (r3)}. Lemma~\ref{lem:kk} also implies that we can add the inclusions above (with fresh $A^i_{R}$) to $\T$ if required,  thereby obtaining a conservative extension of $\T$; so from now on we always assume \emph{$\T$ to be in CoNF}. 
These observations enable the proof of the following two lemmas. The first one characterises the atoms $B(u,n)$ in $\CKTR$:
\begin{lemma}[monotonicity]\label{l:anonym}
Suppose $a \leadsto^0_{R} u$ in $\CKTR$. If either $m < n < 0$ or $0< n < m$, then
$B(u,n)\in \CKTR$ implies $B(u,m)\in \CKTR$\textup{;} moreover, if either $n < m = -|\T|$ or $|\T|=m < n$, then $B(u,n)\in \CKTR$ iff $B(u,m)\in \CKTR$.
%
%\item[--] for any $w$ in $\CKTR$, there is $v$ of depth $\le 2|\T|$ such that $C(w,k)\in \CKTR$ implies $C(v,k)\in \CKTR$, for all $k \in \mathbb Z$.
%\end{itemize}
\end{lemma}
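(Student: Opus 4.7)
The plan is to prove both parts of the lemma together, by reducing the derivation of atoms $B(u,n)$ to a closure under the concept rules applied only to atoms about $u$, and then showing that the desired monotonicity and stabilization structure is preserved by those rules.

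First, I would invoke Lemma~\ref{lem:kk} to obtain, for each $R'\in\roleT$, the corresponding monotonicity and stabilization of the set $T_{R'}=\{n\mid R'(a,u,n)\in\CKTR\}$. Rule \textbf{(ex)} applied to these role atoms yields the ``seed'' concept atoms $\exists R'^{-}(u,n)$ (and $\exists R'(u,n)$ for the reverse direction when applicable), which thereby inherit the same structure at $u$: downward-closed and stable by $-|\roleT|$ on the past side, upward-closed and stable by $|\roleT|$ on the future side.

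Next I would exploit the CoNF assumption to argue that these seeds, together with closure under \textbf{(c1)}--\textbf{(c3)} applied only to atoms mentioning $u$, already generate all of $\type_u(n)=\{B\mid B(u,n)\in\CKTR\}$. Any application of $(\leadsto)$ at $u$ to create a successor $w$ is triggered by some $\exists R'(u,n)$ already in the structure; and the CoNF axioms $A^m_R\sqsubseteq\exists R'$ ensure that every existential $\exists R'(u,n)$ arising from role atoms $R'(u,w,n)$ with $w$ fresh is in fact derivable from atoms $A^m_R(u,n)$ via \textbf{(c2)}/\textbf{(c3)} without reference to $w$. This reduces the analysis of $B(u,n)$ to a self-contained fixpoint on the time axis at $u$.

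Then I would verify that the class of subsets of $\mathbb Z$ with the required structure (downward-closed on $n<0$, upward-closed on $n>0$) is closed under the relevant operations: \textbf{(c1)} takes intersections, and such sets are closed under finite intersection; for \textbf{(c2)}, if $S_B=\{n\mid B(u,n)\in\CKTR\}$ is non-empty with $m^{*}=\inf S_B\in\mathbb Z\cup\{-\infty\}$, then $S_{\Diamond_{\PT}B}=\{n\in\mathbb Z\mid n>m^{*}\}$, and monotonicity of $S_B$ forces $m^{*}\ge 0$ whenever $m^{*}$ is finite, so $S_{\Diamond_{\PT}B}$ has the required shape; a symmetric computation handles \textbf{(c3)}. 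The first monotonicity claim of the lemma then follows by induction on the closure stage.

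Finally, for stabilization, I would use a pigeonhole argument on the signatures $\type_u(n)$. The monotonicity just established says that for $n'<n<0$ we have $\type_u(n)\subseteq\type_u(n')$, so the chain $\type_u(-1)\subseteq\type_u(-2)\subseteq\cdots$ is non-decreasing in a universe of at most $|\T|$ basic concepts; hence it stabilizes after at most $|\T|$ strict increases, giving $\type_u(n)=\type_u(-|\T|)$ for all $n\le-|\T|$, and symmetrically $\type_u(n)=\type_u(|\T|)$ for all $n\ge|\T|$. The main obstacle I anticipate is making the CoNF reduction in the second step fully rigorous: one must check that the extra $A^m_R$-axioms really do intercept every derivation of $\exists R'(u,n)$ that would otherwise pass through a successor of $u$, so that the restricted closure on atoms mentioning only $u$ coincides with $\CKTR$ on those atoms. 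Once that is in hand, the remaining steps are routine.
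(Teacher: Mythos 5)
Your overall strategy---reduce the concept atoms at $u$ to a self-contained fixpoint on the time line via Lemma~\ref{lem:kk}, \textbf{(ex)} and the CoNF axioms, and then check that the class of ``admissible'' subsets of $\mathbb{Z}$ (downward-closed on the negative side, upward-closed on the positive side) is preserved by \textbf{(c1)}--\textbf{(c3)}---is sound for the first claim of the lemma, and is arguably a more structural route than the paper's bare induction over rule applications. The CoNF reduction that you single out as the main obstacle is exactly the observation the paper records after Lemma~\ref{lem:kk} (and again after Lemma~\ref{cons}), so that part of your plan is consistent with what is actually needed.

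The genuine gap is in your stabilisation step. A pigeonhole count on the chain $\type(u,1)\subseteq\type(u,2)\subseteq\cdots$ bounds the \emph{number} of positions at which the type strictly grows, but it says nothing about \emph{where} those positions lie: nothing in the monotonicity property alone rules out a chain that is constant up to position $10^{6}$ and only then strictly grows, and such a chain still has at most $|\T|$ strict increases. So ``it stabilizes after at most $|\T|$ strict increases'' does not yield $\type(u,n)=\type(u,|\T|)$ for all $n>|\T|$; you must localise the jumps. The paper's (omitted) argument does this via the analogue of Claim~2 in the proof of Lemma~\ref{lem:kk}: if some basic concept holds at $(u,\ell)$ but not at $(u,\ell-1)$ for $\ell>1$, then at least $\ell$ distinct basic concepts already hold at $(u,\ell)$, whence no jump can occur at a position beyond the number of basic concepts. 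Alternatively, you could first prove that once $\type(u,\ell)=\type(u,\ell+1)$ for some $\ell\geq|\roleT|$ (so that the seeds $\exists R'^-(u,\cdot)$ supplied by Lemma~\ref{lem:kk} are already stable), the chain is constant from $\ell$ onwards---by a case analysis on the last rule deriving a putative new atom at $\ell+2$---and only then combine this with your pigeonhole count. As written, the final sentence of your argument does not follow from what precedes it.
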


The second lemma characterises the ABox part of $\C_\K$ and is a straightforward generalisation of Lemma~\ref{lem:bound}:
\begin{lemma}\label{l:sructure}
For any KB $\K = (\T,\A)$ and any atom $\alpha$ of the form $A(a,n)$, $\exists R(a,n)$ or $R(a,b,n)$, where $a,b\in \ind(\A)$ and $n \in \mathbb Z$, we have $\alpha\in \C_\K$ iff $\alpha\in \op^{n_\T}(\A^{\mathbb Z})$.
\end{lemma}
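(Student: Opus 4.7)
The inclusion $\op^{n_\T}(\A^{\mathbb Z}) \subseteq \C_\K$ is immediate, because the $\op$-rules form a subset of the $\op_1$-rules and $\C_\K = \op_1^\infty(\A^{\mathbb Z})$. The work is in the converse: every named-individual atom $\alpha \in \C_\K$ of the listed form is already in $\op^{n_\T}(\A^{\mathbb Z})$. The plan is to split this into two substeps: (i) show that no application of $(\leadsto)$ ever produces a new atom over $\ind(\A)$ that cannot already be obtained using the seven $\op$-rules on $\A^{\mathbb Z}$; and (ii) once (i) is in place, transfer the bound $n_\T = (4 \cdot |\T|)^4$ of Lemma~\ref{lem:bound} essentially verbatim.

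For (i), the CoNF assumption on $\T$ is crucial. Any anonymous cone in $\C_\K$ is spawned by a single step $u \leadsto^{n_0}_{R_0} v$ with $u \in \ind(\A)$ and $v$ fresh; all atoms subsequently derived that mention $v$ stay confined to this cone and, by Lemmas~\ref{lem:kk} and~\ref{l:anonym}, reproduce, up to the time shift by $n_0$, exactly the atoms of $\CKTR$. The only way such a cone can feed back into $\ind(\A)$ is via rule {\bf (ex)} on some $R'(u,v,n_0+m) \in \C_\K$, producing $\exists R'(u,n_0+m)$. But by CoNF, whenever $R'(a,v,m) \in \CKTR$ with $|m| \leq |\roleT|$, the inclusion $A^m_{R_0} \sqsubseteq \exists R'$ lies in $\T$, and the chain $\exists R_0 \sqsubseteq A^0_{R_0}$, $\Diamond_{\PT}\exists R_0 \sqsubseteq A^1_{R_0}$, $\Diamond_{\FT}\exists R_0 \sqsubseteq A^{-1}_{R_0}$, \dots{} (together with the $\Diamond_{\PT/\FT}A^m_{R_0} \sqsubseteq A^{m\pm 1}_{R_0}$ inclusions) lets us derive $A^m_{R_0}(u, n_0+m)$ using only {\bf (c2)}/{\bf (c3)} from $\exists R_0(u,n_0)$ and then $\exists R'(u, n_0+m)$ by {\bf (c1)}---entirely within $\op^\infty(\A^{\mathbb Z})$. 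For $|m| > |\roleT|$, the monotonicity part of Lemma~\ref{lem:kk} collapses everything to the boundary case $|m| = |\roleT|$, which is already handled. Role atoms $R(a,b,n)$ between named individuals are produced only by {\bf (r1)}--{\bf (r3)} from other role atoms on the same pair $(a,b)$, and so are completely insensitive to $(\leadsto)$.

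For (ii), once we know that $\op$-closure of $\A^{\mathbb Z}$ suffices to capture all named atoms of $\C_\K$, the argument of Lemma~\ref{lem:bound}---bounding the alternations between {\bf (r1)}/{\bf (r2)} and {\bf (r1)}/{\bf (r3)} via the cross-over combinatorics on $\roleT$, and the analogous analysis for {\bf (c1)}/{\bf (c2)}/{\bf (c3)} over the concept names of $\T$---applies directly, since the bound depends only on the shape of the rules and on $|\T|$, not on $\A$. Hence $\op^\infty(\A^{\mathbb Z}) = \op^{n_\T}(\A^{\mathbb Z})$, and combining with (i) yields the lemma. The main obstacle is substep~(i): precisely matching a derivation step inside a cone to the CoNF inclusion $A^m_{R_0} \sqsubseteq \exists R'$ that certifies its named-individual consequence, and ruling out cross-cone interference. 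I expect this to be routine, because $(\leadsto)$ always uses a \emph{fresh} individual, so distinct cones are pairwise disjoint and interact with $\ind(\A)$ only through their roots---which is exactly the interface encoded by the CoNF concept names $A^m_R$.
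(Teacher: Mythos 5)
Your proposal is correct and follows essentially the route the paper intends (the paper itself leaves this proof implicit, calling the lemma a ``straightforward generalisation'' of Lemma~\ref{lem:bound}): the CoNF inclusions $A^m_R\sqsubseteq\exists R'$ are introduced precisely so that the feedback from anonymous cones into $\ind(\A)$ is simulated by the concept rules \textbf{(c1)}--\textbf{(c3)} alone, role atoms between named individuals are unaffected by $(\leadsto)$, and the cross-over bound of Lemma~\ref{lem:bound} then transfers since it depends only on the rule shapes and $|\T|$. Both halves of your argument match the paper's supporting observations (the three-phase construction of $\C_{(\T,\A)}$ after Lemma~\ref{cons}, and Lemmas~\ref{lem:kk} and~\ref{l:anonym} for the $|m|>|\roleT|$ stabilisation), so no gap to report.
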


An obvious extension of the rewriting of Theorem~\ref{th:flat0}  provides, for every CQ $\q(\vec{x}, \vec{s})$, a UCQ $\ext^\mathcal{T}_{\q}(\vec{x},\vec{s})$ such that for all $\vec{a}\subseteq \ind(\A)$ and $\vec{n}\subseteq \mathbb Z$ of the appropriate length,
\begin{equation}\label{groundlevel}
\C^0_{\K} \models \q(\vec{a}, \vec{n})
\quad \text{iff} \quad \A^{\mathbb{Z}} \models \ext^\mathcal{T}_{\q}(\vec{a},\vec{n}).
\end{equation}
%e
%where $(\C_{\K})_{|\A^{\mathbb{Z}}}$ is the restriction of $\C_{\K}$ to the individuals in $\A^{\mathbb{Z}}$.
%
Moreover, for a basic concept $\exists R$, we find a UCQ $\ext^\mathcal{T}_{\exists R}(\xi,\tau)$ such that, for any $a\in \ind(\A)$ and $n\in \mathbb{Z}$,
%%
%\begin{equation}\label{existsR}
$\exists R(a,n) \in \C_{\K}$  iff $\A^{\mathbb{Z}} \models 
\ext^\mathcal{T}_{\exists R}(a,n)$.

We now use the obtained results to show that one can find all answers to a CQ $\q$ over a \TQL{} KB $\K$ by only considering a fragment of $\C_\K$ whose size is polynomial in $|\T|$ and $|\q|$. This property is called the \emph{polynomial witness property}~\cite{GottlobS11}. Denote by $\mathcal{C}_\mathcal{K}^{\smash{d,\ell}}$, for $d,\ell \ge 0$, the restriction of $\mathcal{C}_\mathcal{K}^{\smash{d}}$ to the moments of time in the interval $[\min \tc(\A) - \ell, \max \tc(\A) + \ell]$.

Let $\q(\vec{x}, \vec{s})$ be a CQ. Tuples $\vec{a} \subseteq \ind(\A)$ and $\vec{n} \subseteq \tc(\A)$ give a certain answer to $\q(\vec{x}, \vec{s})$ over $\K = (\T, \A)$ iff there is a \emph{homomorphism} $h$ from $\q$ to $\C_\K$, which maps individual (temporal) terms of $\q$ to individual (respectively, temporal) terms of $\C_\K$ in such a way that the following conditions hold:
\begin{itemize}
\item[--] $h(\vec{x}) = \vec{a}$ and $h(b) = b$, for all $b\in\ind(\A)$;
\item[--] $h(\vec{s}) = \vec{n}$ and $h(m) = m$, for all $m\in\tc(\A)$; %$b\in\ind(\A)$ and ;
% individual and temporal constants $b$ and $m$; 
\item[--] $h(\q) \subseteq \C_{\K}$,
%\item[--] if $(\tau_1 < \tau_2) \in \q$ then $h(\tau_1) < h(\tau_2)$;
%
%\item[--] if $(\tau_1 = \tau_2) \in \q$ then $h(\tau_1) = h(\tau_2)$;  
%
%\item[--] if $A(\xi,\tau) \in \q$ then $A(h(\xi), h(\tau)) \in \C_\K$;
%
%\item[--] if $P(\xi, \zeta,\tau) \in \q$ then $P(h(\xi), h(\zeta), h(\tau)) \in \C_\K$.
\end{itemize}
where $h(\q)$ denotes the set of atoms obtained by replacing every term in $\q$ with its $h$-image, e.g., $P(\xi, \zeta,\tau)$ with $P(h(\xi), h(\zeta), h(\tau))$, $(\tau_1 < \tau_2)$ with $h(\tau_1) < h(\tau_2)$, etc.

Now, using the monotonicity lemmas for the temporal dimension and the fact that atoms of depth $> |\roleT|$ in the canonical models duplicate atoms of smaller depth, we obtain
\begin{theorem}\label{thm:canonical-bounded}
There are polynomials $f_1$ and $f_2$ such that, for any consistent  \TQL{} KB $\mathcal{K} = (\mathcal{T}, \mathcal{A})$, any CQ $\q(\vec{x}, \vec{s})$ and any $\vec{a} \subseteq \ind(\A)$ and $\vec{n} \subseteq \tc(\A)$, we have $\mathcal{K} \models \q(\vec{a}, \vec{n})$ iff there is a homomorphism $h\colon \q \to \C_\K$ such that  $h(\q) \subseteq \C^{\smash{d,\ell}}_\K$, where $d = f_1(|\mathcal{T}|,|\q|)$ and $\ell = f_2(|\mathcal{T}|,|\q|)$.
\end{theorem}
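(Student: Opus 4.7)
The plan is to start from Theorem~\ref{thm:canonical}, which reduces $\K\models \q(\vec a,\vec n)$ to the existence of a homomorphism $h\colon \q\to \C_\K$. Given such an $h$, I will construct a second homomorphism $h'$ into $\C_\K$ whose image lies in $\C^{d,\ell}_\K$ for polynomial $d=f_1(|\T|,|\q|)$ and $\ell=f_2(|\T|,|\q|)$. The construction has two largely independent stages: bounding the depth at which $h'$ lands, and bounding the temporal window. In both cases, the monotonicity lemmas (Lemmas~\ref{lem:kk}, \ref{l:anonym}) together with Lemma~\ref{l:sructure} give a periodicity/duplication phenomenon that lets us fold the image of $h$ back into a small region.

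For the depth bound, I exploit the tree-shaped structure of the anonymous part of $\C_\K$. Whenever $u\leadsto^n_R v$ with $u$ at depth $d'$, the submodel rooted at $v$ is (up to a temporal shift by $n$) isomorphic to the anonymous part of $\C_{\K_\T^R}$, and therefore its isomorphism type is determined entirely by $R$. Consequently two anonymous elements created by the same role admit canonical matching substructures; so if any variable of $\q$ is mapped beyond depth $|\q|\cdot(|\roleT|+1)$, some ancestor along the $\leadsto$-path shares the creation role with a strictly shallower ancestor, and the subtree can be retargeted onto the shallower copy. Iterating gives $h'$ of depth at most $f_1(|\T|,|\q|)=|\q|\cdot(|\roleT|+1)$.

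For the temporal window, I use Lemmas~\ref{lem:kk} and~\ref{l:anonym}: once one is more than $|\T|$ steps past $\max\tc(\A)$ (resp.\ before $\min\tc(\A)$), the atoms $A(u,m)$, $\exists R(u,m)$ and $R(a,b,m)$ are in $\C_\K$ iff they already hold at the saturation point $m=\pm|\T|$ (for ABox elements this follows from Lemma~\ref{l:sructure}; for anonymous elements it follows from the corresponding statement in $\CKTR$ shifted by the creation moment). Sort the distinct values $h(\tau_1)<\dots<h(\tau_k)$ of the temporal terms of $\q$, and replace consecutive runs of values lying further than $|\T|$ outside the window $[\min\tc(\A),\max\tc(\A)]$ by consecutive integer representatives clustered at the window's boundary; this uses at most $|\q|$ extra slots on each side, so taking $\ell=f_2(|\T|,|\q|)=|\T|+|\q|$ suffices. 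Order and (dis)equality constraints are preserved by construction, and atom membership is preserved by the monotonicity statements.

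The main obstacle is making the two stages compatible: when an anonymous subtree is folded to a shallower copy, its internal temporal points must be shifted too, and that shift has to be compatible with the temporal-window contraction applied to the whole query. I address this by performing the depth folding first (it only retargets individual terms and preserves the time coordinates on the retargeted submodel up to a fixed shift per folded subtree), and then carrying out the temporal contraction globally on the resulting homomorphism using the order-preserving renaming above. Because both foldings are \emph{order-preserving} and rely on isomorphisms of $\C_\K$-fragments, the composite map is a homomorphism into $\C^{d,\ell}_\K$ with the required polynomial bounds.
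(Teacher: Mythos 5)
Your overall strategy (fold the match to bounded depth, then contract the temporal window, both justified by the monotonicity lemmas) is the same as the paper's, but the depth-folding step has a genuine gap. You retarget a subtree of the match from a deep anonymous element onto a strictly shallower element created by the same role, observing that the two root isomorphic substructures of $\C_\K$ \emph{up to a temporal shift}. That shift is the problem: the temporal terms occurring in the retargeted atoms are, in a totally ordered CQ, related by $<$ and $=$ to every other temporal term of $\q$, including temporal constants, the answer variables $\vec{s}$ (whose images are fixed in $\tc(\A)$), and terms matched in the ABox part or in other subtrees. Applying ``a fixed shift per folded subtree'' to their $h$-images can therefore violate these constraints, and your subsequent global contraction cannot repair this, since it is order-preserving and only moves points lying outside the ABox window. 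The paper's Lemma~\ref{gettingclose} is engineered precisely to avoid any shift: it rewrites the vector representation $(a,n_0,R_0,\dots,R_m)$ of a single deep individual by deleting a segment between two occurrences of the same role and absorbing the resulting displacement $\delta_{ij}$ into a \emph{later} coordinate $n_{j'+1}$ of the vector, so that all absolute time coordinates of the matched atoms are unchanged; the monotonicity Lemma~\ref{l:anonym} then guarantees $\type(u,k)\subseteq\type(v,k)$ for the \emph{same} $k$. Without an argument of this kind your $h'$ need not be a homomorphism.

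A secondary issue concerns the width bound: your $\ell=|\T|+|\q|$ only reserves room for the images of the query's temporal terms. But each matched anonymous individual sits at the end of a $\leadsto$-chain of length up to $d$, and the creation times along that chain (the temporal extension of the individual) must also survive the contraction, with gaps of up to $|\T|$ between consecutive kept points; otherwise the contracted individuals need not exist in $\C_\K$. This is why the paper takes $\ell=|\T|\cdot|\q|\cdot d$. Since the theorem only asks for \emph{some} polynomial this is fixable, but as written your temporal argument does not establish the bound it claims.
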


We are now in a position to define a rewriting for any given CQ and \TQL{} TBox.

%**********************

\section{UCQ Rewriting}

Suppose $\q(\vec{x}, \vec{s})$ is a CQ and $\T$ a \TQL{} TBox (in CoNF). Without loss of generality we assume $\q$ to be totally ordered. 
By a \emph{sub-query} of $\q$ we understand any subset $\q' \subseteq \q$ containing all temporal constraints 
\mbox{$(\tau < \tau')$} and $(\tau = \tau')$ that occur in $\q$.
In the rewriting for $\q$ and $\T$ given below, we consider all possible splittings of $\q$ into two sub-queries (sharing the same temporal terms). One is to be mapped to the ABox part of the canonical model $\C_{(\T,\A)}$, and so we can rewrite it using~\eqref{groundlevel}. The other sub-query is to be mapped to the non-ABox part of $\C_{(\T,\A)}$ and requires a different rewriting. 

For every $R \in \roleT$, we construct the set $\CKTR^{\smash{d,\ell}}$, where $d$ and $\ell$
% = f_1(|\mathcal{T}|,|\q|)$ and $\ell = f_2(|\mathcal{T}|,|\q|)$ 
are provided by Theorem~\ref{thm:canonical-bounded}. 
Let $h$ be a map from a sub-query $\q_h \subseteq \q$ to $\CKTR^{\smash{d,\ell}}$ such that $h(\q_h) \subseteq \CKTR^{\smash{d,\ell}}$.
Denote by $\mathcal{X}_h$ the set of individual terms $\xi$ in $\q_h$ with $h(\xi) = a$, and let $\mathcal{Y}_h$ be the remaining set of individual terms in $\q_h$.
We call $h$ a \emph{witness for} $R$ if  
\begin{itemize}\itemsep=0pt
\item[--] $\mathcal{X}_h$ contains at most one individual constant;

\item[--] every term in $\mathcal{Y}_h$ is a quantified variable in $\q$;

\item[--] $\q_h$ contains all atoms in $\q$ with a variable from $\mathcal{Y}_h$.
\end{itemize}

Let $h$ be a witness for $R$. Denote by $\leadsto$ the union of all $\leadsto_{R'}^n$ in $\CKTR^{\smash{d,\ell}}$. Clearly, $\leadsto$ is a tree order on the individuals in $\CKTR^{\smash{d,\ell}}$, with root $a$. Let $T_h$ be its minimal sub-tree containing $a$ and the $h$-images of all the individual terms in $\q_h$. For each $v \in T_h\setminus\{a\}$, we take the (unique) moment $\g(v)$ with $u \leadsto_R^{\g(v)} v$, for some $u$ and $R$, and set $\g(a) = 0$.
For $A(y,\tau) \in \q_h$, we say that $h(y)$ \emph{realises} $A(y,\tau)$. For any $P(\xi,\xi',\tau) \in \q_h$, there are $u,u'\in T_h$ with $u \leadsto u'$ and $\{u,u'\} = \{h(\xi),h(\xi')\}$; we say that $u'$ \emph{realises} $P(\xi,\xi',\tau)$. 
Let $\vec{r}$ be a list of fresh temporal variables $r_u$, for $u\in T_h\setminus\{a\}$. Consider the following formula, whose free variables are $r_a$ and the temporal variables of $\q_h$:
\begin{equation*}
\mathsf{t}_h = \exists \vec{r}\, \bigl( \bigwedge_{u\leadsto v} \delta^{\g(v) - \g(u)}(r_{u},r_v) \land{} \hspace*{-2em} \bigwedge_{u \text{ realises } \alpha(\vec{\xi},\tau)}\hspace*{-2em} \delta^{h(\tau) - \g(u)}(r_u,\tau) \bigr),
\end{equation*}
% 
%
%
%Then for every such atom $\alpha(\vec{y},\tau) \in \q_h$ with history $(n_1,\dots,n_d,n)$, we define
%%
%\begin{equation*}
%H_{\alpha(\vec{y},\tau)}(r_0,r) =  \exists r_1,\dots,r_d\, \bigl(\bigwedge_{i=1}^{d} \delta^{n_i}(r_{i-1},r_i) \land \delta^n (r_d,r)\bigr),
%\end{equation*}
%
where the formulas 
%$\delta^0(t,s)$ is $(t=s)$, 
$\delta^n(t,s)$ say that $t$ is at least $n$ moments before $s$: that is, $\delta^0(t,s)$ is $(t=s)$ and $\delta^n(t,s)$ is
\begin{align*}
& \exists s_1,\dots,s_{n-1} (t < s_1 < \dots < s_{n-1} < s), &  \text{ if $n >0$},\\
& \exists s_1,\dots,s_{|n|-1} (t > s_1 > \dots > s_{|n|-1} > s), & \text{ if $n <0$}.
\end{align*}
%
%
%
%Let $d$ be the maximal depth of $h(y)$ in $\C^{k,\ell}_{\K_{\T,R}}$, for $y \in \mathcal{Y}_h$, and suppose also that $(a,n_0) \leadsto_{R_0} (u_0,n_0) \leadsto_{R_1}  \dots \leadsto_{R_{d-1}} (u_{d-1},n_{d-1})$, where $n_0 =0$ and $R_0 = R$. Let $m_i = n_i - n_{i-1}$, for $1 \le i \le d-1$. 
%If $\mathcal{X}_h \ne \emptyset$, we pick some $\xi_h \in \mathcal{X}_h$, which is  an individual constant if it exists, and 
Take a fresh variable $x_h$ and associate with $h$ the formula 
\begin{align*}%\label{tw-formula}
\tw_h =~ & \exists r_a \exists x_h \, \bigl[ \ext^\mathcal{T}_{\exists R}(x_h, r_a) \ \land  \bigwedge_{h(\xi) = a} \hspace*{-2mm} (\xi = x_h)  \ \ \land \ \ \mathsf{t}_h\bigr].
%
%& \exists r_0,\dots,r_{d-1} \, ( \delta^0(t,r_0) \land  \bigwedge_{i=1}^{d-1} \delta^{m_i}(r_{i-1},r_i) \land {}\\
%
%& \bigwedge_{\begin{subarray}{c} h(\xi) = u_i \\ C(\xi,\tau) \in \q_h\end{subarray}} \delta^{h(\tau) - n_i}(r_i,\tau) \land \bigwedge_{\begin{subarray}{c} h(\xi) = u_i \\ P(\xi,\zeta,\tau) \in \q_h\end{subarray}} \delta^{h(\tau) - n_i}(r_i,\tau))\big]
\end{align*}
%
%If $\mathcal{X}_h = \emptyset$, we take $\xi_h$ to be a fresh individual variable $x$ and add $\exists x$ to the prefix of $\tw_h$.
%
To give the intuition behind $\tw_h$, suppose that $\C_{(\T,\A)} \models^g \tw_h$, for some assignment $g$. Then $g$ maps all terms in $\mathcal{X}_h$ to $g(x_h) \in \ind(\A)$ such that $\exists R (g(x_h), g(r_a)) \in \C_{(\T,\A)}$, so $(g(x_h), g(r_a))$ is the root of a substructure of $\C_{(\T,\A)}$ isomorphic to $\CKTR$ in which the variables from $\mathcal{Y}_h$ can be mapped according to $h$. For temporal terms, the formula $\mathsf{t}_h$ cannot specify the values prescribed by $h$: without $\neg$ in UCQs, we can only say that $\tau$ is at least (not exactly) $n$ moments before $\tau'$. However, by Lemmas~\ref{lem:kk} and~\ref{l:anonym}, this is still enough to ensure that $g$ and $h$ give a homomorphism from $\q_h$ to $\C_{(\T,\A)}$.
\begin{example}\em
Let $\T$ be the same as in Example~\ref{example1} and let 
\begin{align*}
\q(x,t) = \exists y,z,t'\,\big( (t < t')\land  Q(x,y,t) \land S'(y,z,t')\big).
\end{align*}
The map $h = \{ x\mapsto a, y \mapsto v, z\mapsto u_1, t \mapsto 1, t' \mapsto 2\}$ is a witness for $R$, with $\q_h = \q$ and $\tw_h$ is the following formula
\begin{equation*}
\exists r_a\exists x_h \, \bigl(\ext^\mathcal{T}_{\exists R}(x_h,r_a) \land (x_h = x) \land
\exists r_v\exists r_{u_1}\bigl(\delta^0(r_a,r_v) \land \delta^1(r_v,r_{u_1}) \land \delta^1(r_v,t) \land \delta^1(r_{u_1},t')\bigr)\bigr).
\end{equation*}
\end{example}

We can now define a rewriting for $\q(\vec{x}, \vec{s})$ and $\mathcal{T}$. Let $\mathfrak{T}$ be the set of all witnesses for $\q$ and $\mathcal{T}$. We call a subset $\mathfrak S \subseteq \mathfrak{T}$  \emph{consistent} if $(\mathcal{X}_{h_1} \cup \mathcal{Y}_{h_1}) \cap (\mathcal{X}_{h_1}\cup\mathcal{Y}_{h_2}) \subseteq \mathcal{X}_{h_1}\cap \mathcal{X}_{h_2}$, for any distinct $h_1, h_2 \in \mathfrak S$.   Assuming that $\vec{y}$ contains all the quantified variables in $\q$ and $\q \setminus \mathfrak S$ is the sub-query of $\q$ obtained by removing the atoms in $\q_h$, $h \in \mathfrak S$, other than $(\tau < \tau')$ and $(\tau = \tau')$, we set:
\begin{align*}%\label{long-rew}
\q^*(\vec{x}, \vec{s}) ~=~ \exists \vec{y}  \bigvee_{\begin{subarray}{c}
\mathfrak S \subseteq \mathfrak{T}\\ \mathfrak S \text{ consistent}
\end{subarray}} \hspace*{-1mm}
 \big(
\bigwedge_{h \in \mathfrak S} \tw_h  \ \land \ \ext^\T_{\q \setminus \mathfrak S}\big).
%\hspace*{-3mm} \bigwedge_{
%\begin{subarray}{c}
%\alpha(\vec{z}) \in \q\\
%\forall h \in \mathfrak S \, \alpha(\vec{z}) \notin \q_h
%\end{subarray}
%} \hspace*{-4mm}
%\mathsf{ext}^\mathcal{T}_{\alpha} (\vec{z})  \land \q_\mathsf{t} \Big).
\end{align*}
\begin{theorem}\label{thm:rewriting}
Let $\T$ be a \TQL{} TBox in CoNF and $\q(\vec{x}, \vec{s})$ a totally ordered  CQ. Then, for any consistent KB $(\T,\A)$ and any tuples $\vec{a} \subseteq \ind(\A)$ and $\vec{n} \subseteq \mathbb Z$, 
$$
(\T, \A) \models \q(\vec{a},\vec{n}) \quad \text{iff} \quad \A^{\mathbb Z} \models \q^*(\vec{a},\vec{n}).
$$
$(\T,\A)$ is inconsistent iff $(\T^\bot,\A) \models \q^\bot$.
\end{theorem}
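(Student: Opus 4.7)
The proof rests on Theorem~\ref{thm:canonical-bounded}, which reduces $(\T,\A)\models\q(\vec a,\vec n)$ to the existence of a homomorphism $h\colon\q\to\C_\K^{d,\ell}$. The plan is to establish the correspondence between such homomorphisms and satisfying assignments of $\q^*$ in $\A^{\mathbb Z}$, exploiting the fact that $\C_\K$ splits into an ABox part on $\ind(\A)$ and a disjoint union of anonymous subtrees, each rooted at some fresh successor introduced by $a\leadsto^{n_0}_R v_0$ and isomorphic, up to the temporal shift by $n_0$, to the anonymous part of $\CKTR$. The disjuncts of $\q^*$ exactly enumerate the ways to split $\q$ into a ground sub-query $\q\setminus\mathfrak S$ mapped to the ABox part and sub-queries $\q_h$ mapped to distinct anonymous subtrees; the ground part is handled by~\eqref{groundlevel}, while each $\q_h$ is handled by the witness formula $\tw_h$.

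For completeness, fix such a homomorphism $h$ and cluster the individual terms of $\q$ by the maximal anonymous subtree of $\C_\K^{d,\ell}$ containing their $h$-image (terms mapped into $\ind(\A)$ remain unclustered). Each cluster determines a unique anchoring $a\leadsto^{n_0}_R v_0$, and composing the restriction of $h$ to the cluster with the backward time shift by $n_0$ and the isomorphism to $\CKTR^{d,\ell}$ yields a witness $h'\in\mathfrak{T}$ whose $\mathcal{X}_{h'}$ consists of the terms mapped to $a$. The three bullet conditions for witnesses hold by construction, and the resulting $\mathfrak S$ is consistent because distinct anonymous subtrees meet only at their ABox anchor. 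Setting $g(x_{h'})=a$, $g(r_a)=n_0$ and $g(r_u)=\g(u)+n_0$ for $u\in T_{h'}$ makes every $\delta^{\g(v)-\g(u)}(r_u,r_v)$ and every realisation-gap $\delta^{h'(\tau)-\g(u)}(r_u,\tau)$ hold with equality; the atom $\ext^\T_{\exists R}(a,n_0)$ is satisfied because $\exists R(a,n_0)\in\C_\K$ by Lemma~\ref{l:sructure} and the $\exists R$-analogue of~\eqref{groundlevel}. The atoms of $\q\setminus\mathfrak S$ map via $h$ into $\C_\K^0$, so $\ext^\T_{\q\setminus\mathfrak S}$ holds by~\eqref{groundlevel}.

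For soundness, suppose $\A^{\mathbb Z}\models\q^*(\vec a,\vec n)$ via a consistent $\mathfrak S$ and assignment $g$. Equation~\eqref{groundlevel} applied to $\ext^\T_{\q\setminus\mathfrak S}$ yields a homomorphism $h_0\colon\q\setminus\mathfrak S\to\C_\K^0$. For each $h\in\mathfrak S$, satisfaction of $\ext^\T_{\exists R}(g(x_h),g(r_a))$ gives $\exists R(g(x_h),g(r_a))\in\C_\K$ by Lemma~\ref{l:sructure}, and hence an anonymous subtree of $\C_\K$ isomorphic to $\CKTR$ shifted by $g(r_a)$, rooted at the corresponding fresh successor of $g(x_h)$. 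Map each $y\in\mathcal{Y}_h$ to the element of this subtree corresponding to $h(y)\in\CKTR^{d,\ell}$ and each $\xi\in\mathcal{X}_h$ to $g(x_h)$. The crucial check is that this really produces a homomorphism of $\q_h$: the formulas in $\mathsf{t}_h$ force the gaps picked by $g$ to be at least as large (with the right sign) as the template gaps in $\CKTR$, placing every chosen temporal target at least as far from the introduction time $\g(u)$ of its node as the template atom requires. The monotonicity Lemmas~\ref{lem:kk} and~\ref{l:anonym} then guarantee that every required atom of $\q_h$ still holds at these larger gaps. Consistency of $\mathfrak S$ forces the witness maps and $h_0$ to agree on shared variables, which must lie in some $\mathcal{X}_h$ and are thus pinned to $g(x_h)\in\ind(\A)$, so everything glues into a homomorphism $\q\to\C_\K$ and Theorem~\ref{thm:canonical-bounded} closes the main equivalence. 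The inconsistency claim is immediate from the $(\T^\bot,\q^\bot)$ construction already used in Theorem~\ref{th:flat0}.

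The main obstacle is the soundness direction: verifying that the lower-bound $\delta^k$-gaps truly suffice requires checking that the edge gaps in $\mathsf{t}_h$ telescope along $T_h$ so that each temporal target lands on the correct side of its node's introduction time for monotonicity to apply, which in turn relies on $\q$ being totally ordered and on a short case analysis over the signs of $\g(v)-\g(u)$. Everything else is structural bookkeeping, matching the combinatorics of witnesses to the forest-like geometry of $\C_\K$ established in Section~4.
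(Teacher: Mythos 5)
Your overall strategy coincides with the paper's: both directions reduce to Theorem~\ref{thm:canonical-bounded}, split $\q$ into a ground part handled by~\eqref{groundlevel} and clusters mapped into anonymous subtrees handled by the witness formulas $\tw_h$, and both invoke the monotonicity Lemmas~\ref{lem:kk} and~\ref{l:anonym} to justify, in the soundness direction, that the lower-bound $\delta^k$-gaps suffice. Your soundness argument, the gluing via consistency of $\mathfrak S$, and the dispatch of the inconsistency claim match the paper's proof essentially step for step.

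There is, however, one concrete gap in your completeness direction. You claim that pulling the restriction of $h$ to a cluster back through the time shift by $n_0$ and the isomorphism onto $\CKTR$ ``yields a witness $h'\in\mathfrak{T}$'' by construction. But a witness is, by definition, a map whose image lies in $\CKTR^{\smash{d,\ell}}$, i.e., within the temporal window $[-\ell,\ell]$ around the anchor time $0$ of $\K_\T^R$. Theorem~\ref{thm:canonical-bounded} only guarantees $h(\q)\subseteq\C_\K^{\smash{d,\ell}}$, whose temporal window is $[\min\tc(\A)-\ell,\max\tc(\A)+\ell]$; after shifting by $-n_0$ the image can occupy an interval of width up to $\max\tc(\A)-\min\tc(\A)+2\ell$, which need not be contained in $[-\ell,\ell]$ when the ABox spans many time points. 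The paper flags exactly this (``although $h(\q')\subseteq\CKTR^{\smash{d}}$, we may have $h(\q')\not\subseteq\CKTR^{\smash{d,\ell}}$'') and repairs it by re-running the width-compression argument from the proof of Theorem~\ref{thm:canonical-bounded} inside $\CKTR$ to obtain a genuine witness $h'$. Without this extra compression step the disjunct of $\q^*$ you exhibit may simply not exist in $\mathfrak{T}$. A side effect is that your claim that the $\delta$-formulas are satisfied ``with equality'' must be relaxed, since compression changes the $\g$-values; this is harmless because the $\delta$-formulas only impose lower bounds, but it should be said.
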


%We obtain a rewriting for $\q$ and $\T$ using Lemma~\ref{reduction}.

Theorem~\ref{main} now follows by Lemma~\ref{reduction}.

%****************

\section{Non-Rewritability}\label{sec:non-rewritability}

In this section, we show that the language \TQL{} is nearly optimal as far as rewritability of CQs and ontologies is concerned.

We note first, that the syntax of \TQL{} allows concept inclusions and role inclusions; `mixed' axioms such as the datalog rule $A(x,t) \wedge R(x,y,t) \rightarrow B(x,t)$ are not expressible. The reason is that mixed rules often lead to non-rewritability, as is well known from the DL $\mathcal{EL}$. For example, there does not exist an FO-query $\q(x,t)$ such that $(\T,\A) \models A(a,n)$ iff
$\A^{\mathbb Z}\models \q(a,n)$ for $\T = \{A(y,t) \wedge R(x,y,t) \rightarrow A(x,t)\}$ since such a
query has to express that at time-point $t$ there is an $R$-path from $x$ to some $y$ with $A(y,t)$.

Second, it would seem to be natural to extend \TQL{} with the temporal next/previous-time operators as concept or role constructs.
However, again this would lead to non-rewritability: any FO-rewriting for $A(x,t)$ and $\{\nxt_{\PT}A \sqsubseteq B, \, \nxt_{\PT}B \sqsubseteq A\}$ has to express that there exists $n\geq 0$ such that $A(x,t-2n)$ or $B(x,t-(2n+1))$, which is impossible~\cite{Libkin}.

Another natural extension would be inclusions of the form $A \sqsubseteq \Diamond_{\FT}B$.
(Note that inclusions of the form $A \sqsubseteq \exists R.B$ are expressible in \OWLQL.) But again such an extension would ruin rewritability. The reason is that temporal precedence $<$ is a total order, and so one can construct an ABox $\A$
and a UCQ $\q(x) = \q_{1}\vee \q_{2}$ such that $(\T,\A)\models \q(a)$ but $(\T,\A)\not\models \q_{i}(a)$, $i=1,2$, for $\T = \{A\sqsubseteq \Diamond_{\FT}B\}$. Indeed, we take $\A = \{ A(a,0),C(a,1)\}$ and
\begin{align*}
& \q_{1}(x)= \exists t \, ( C(x,t) \wedge B(x,t)),\\
& \q_{2}(x)= \exists t,t' \, ( (t<t') \wedge C(x,t) \wedge B(x,t')).
\end{align*}
In fact, by reduction of 2+2-SAT~\cite{Schaerf-93}, we prove the following:
\begin{theorem}\label{thm:conphard}
Answering CQs over the TBox $\{A \sqsubseteq \Diamond_{\FT}B\}$ is \coNP-hard for data complexity.
\end{theorem}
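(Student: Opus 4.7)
The plan is to reduce \textsc{2+2-SAT}---satisfiability of CNF formulas whose clauses each consist of exactly two positive and two negative literals, known to be \NP-complete~\cite{Schaerf-93}---to the complement of CQ-entailment for the fixed TBox $\T=\{A\sqsubseteq\Diamond_{\FT}B\}$. I will produce, for every \textsc{2+2-CNF} formula $\varphi$, a polynomial-size ABox $\A_\varphi$ and a fixed CQ $\q$ such that $(\T,\A_\varphi)\models\q$ iff $\varphi$ is unsatisfiable; this yields \coNP-hardness of data complexity.

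Write $\varphi=c_1\wedge\cdots\wedge c_m$ over variables $p_1,\dots,p_n$ with $c_j=p_{i_1^j}\vee p_{i_2^j}\vee\neg p_{i_3^j}\vee\neg p_{i_4^j}$. Put $A(x_i,0)$ in $\A_\varphi$ for every variable $p_i$; for each clause $c_j$, add $P_1(y_j,x_{i_1^j},0)$, $P_2(y_j,x_{i_2^j},0)$, $N_1(y_j,x_{i_3^j},0)$, $N_2(y_j,x_{i_4^j},0)$, where $P_1,P_2,N_1,N_2$ are fresh role names and $y_j$ fresh individuals. The fixed CQ is
\begin{align*}
\q \ = \ \exists y, x_1,\ldots,x_4, t_1, t_2 \, \bigl[\, & P_1(y,x_1,0)\wedge P_2(y,x_2,0)\wedge N_1(y,x_3,0)\wedge N_2(y,x_4,0) \,\wedge \\
& (1<t_1)\wedge(1<t_2)\wedge B(x_1,t_1)\wedge B(x_2,t_2)\wedge B(x_3,1)\wedge B(x_4,1)\,\bigr].
\end{align*}
In every model of $(\T,\A_\varphi)$ the axiom forces a witness $B(x_i,m_i)$ with $m_i>0$ for each $p_i$. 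The encoding identifies $m_i=1$ with ``$p_i$ true'' and $m_i>1$ with ``$p_i$ false'', so $\q$ is designed to match precisely when some clause $c_j$ is falsified.

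\emph{Soundness} (satisfiable $\Rightarrow$ $(\T,\A_\varphi)\not\models\q$): from a satisfying assignment $\nu$, build $\I_\nu$ by extending $\A_\varphi$ with $B(x_i,1)$ when $\nu(p_i)$ is true and $B(x_i,2)$ when $\nu(p_i)$ is false, and no other $B$-atoms on the $x_i$. Then $\I_\nu\models\T$, and any homomorphism witnessing $\q$ would pin $t_1=t_2=2$ and single out a clause $c_j$ falsified by $\nu$, contradicting the choice of $\nu$. \emph{Completeness} (unsatisfiable $\Rightarrow$ $(\T,\A_\varphi)\models\q$): for an arbitrary $\I\models(\T,\A_\varphi)$, set $\nu(p_i)$ true iff $B(x_i,1)\in\I$; some clause $c_j$ is then falsified. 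For each positive literal $p_{i_k^j}$, $B(x_{i_k^j},1)\notin\I$ while the TBox still yields some $B(x_{i_k^j},m)\in\I$ with $m>0$, hence $m>1$, matching $(1<t_k)\wedge B(x_k,t_k)$; for each negative literal $\neg p_{i_k^j}$, $B(x_{i_k^j},1)\in\I$ matches directly. Combined with the four role atoms attached to $y_j$, this produces a homomorphism $\q\to\I$. The chief subtlety lies in the completeness direction: models may contain arbitrarily many extra $B$-atoms, but basing $\nu$ solely on the presence of $B(x_i,1)$ in $\I$, and using the strict inequality $(1<t_k)$ rather than an exact time constant, makes the case analysis go through uniformly across all models.
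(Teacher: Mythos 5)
Your reduction follows essentially the same strategy as the paper's proof: reduce 2+2-SAT to non-entailment, use the axiom $A\sqsubseteq\Diamond_{\FT}B$ to force an unlocated future $B$-witness for each variable gadget, read off a truth value from whether that witness sits at time $1$ or strictly later, and detect a falsified clause with a fixed CQ. Your soundness and completeness arguments for that core mechanism are correct, and your observation that the strict inequality $(1<t_k)$ makes the completeness direction robust against extra $B$-atoms is exactly the right point. (The paper routes the time-point $1$ through a marker concept $C(z,t)$ and auxiliary individuals linked by a role $R$ rather than using the numeric constant $1$ in the query, but that is a cosmetic difference.)

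There is, however, one genuine gap: you have dropped the truth constants. Schaerf's 2+2-SAT allows each of $p_1,p_2,n_1,n_2$ to be a propositional variable \emph{or} one of the constants $0,1$, and this is essential --- if every clause is $p_{i_1}\vee p_{i_2}\vee\neg p_{i_3}\vee\neg p_{i_4}$ with all four positions filled by variables, then the all-true assignment satisfies every clause, so your source problem is trivially decidable and the reduction proves nothing. The fix is small and mirrors the paper's ${\sf true}/{\sf false}$ individuals: add two designated individuals $x_{\sf true}$ with $A(x_{\sf true},0)$ and $B(x_{\sf true},1)$ asserted, and $x_{\sf false}$ with $A(x_{\sf false},0)$ and $B(x_{\sf false},2)$ asserted, let clause positions occupied by constants point to these individuals, and check that in every model $x_{\sf true}$ matches the ``true'' pattern and $x_{\sf false}$ the ``false'' pattern, while in the minimal model built from a satisfying assignment neither matches the opposite pattern. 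With that addition your argument goes through.
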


%****************

\section{Related Work}

 The Semantic Web community has developed a variety
 of extensions of RDF/S and OWL with validity time 
 \cite{DBLP:journals/ws/Motik12,DBLP:conf/www/PuglieseUS08,DBLP:journals/tkde/GutierrezHV07}.
 The focus of this line of research is on representing
 and querying time stamped RDF triples or OWL axioms. 
 In contrast, in our language only instance data are time stamped , while the ontology formulates time independent constraints 
 that describe how the extensions of concepts and roles 
 can change over time. In the temporal DL literature, a similar distinction
 has been discussed as the difference between temporalised
 axioms and temporalised concepts/roles; the expressive power of
 the respective languages is incomparable~\cite{GKWZ03,DBLP:journals/tocl/BaaderGL12}.
 
In Theorem~\ref{th:flat0}, we show rewritability using boundedness
of recursion. This connection between first-order definability
and boundedness is well known from the datalog and logic literature
where boundedness has been investigated extensively 
\cite{DBLP:conf/lics/GaifmanMSV87,DBLP:journals/ijfcs/Meyden00,DBLP:conf/icalp/KreutzerOS07}.
 \citeauthor{DBLP:journals/corr/abs-0902-1179} [2009] investigate
 boundedness for datalog programs on linear orders; the results 
 are different from ours since the linear
 order is the only predicate symbol of the datalog programs considered
 and no further restrictions (comparable to ours) are imposed.

%*********************

\section{Conclusion}

In this paper, we have proved UCQ rewritability for conjunctive queries and \TQL{} ontologies over data instances with validity time. Our focus was solely on the existence of rewritings, and we did not consider efficiency issues such as finding shortest rewritings, using temporal intervals in the data representation or mappings between temporal databases and ontologies. We only note here that these issues are of practical importance and will be addressed in future work. It would also be of interest to investigate the possibilities to increase the expressive power of both ontology and query language. For example, we believe that the extension of \TQL{} with the next/previous time operators, which can only occur in TBox axioms not involved in cycles, will still enjoy rewritability. We can also increase the expressivity of conjunctive queries by allowing the arithmetic operations $+$ and $\times$ over temporal terms, which would make the CQ $A(x,t)$ and the TBox $\{\nxt_{\PT}A \sqsubseteq B, \, \nxt_{\PT}B \sqsubseteq A\}$ rewritable in the extended language.

%Mappings from temporal databases in an OBDA system?

%Note\nb{do we need this} that, in temporal databases, we can timestamp relations with temporal intervals; so we can, for example, write $\nm{lect}(\nm{bob}, e_1, [n_1,n_2])$. 

%*********************

\newpage

\bibliographystyle{named}
\bibliography{local,DL-Lite-bib}

%\end{document}

%****************

\cleardoublepage

\appendix

\section{Proofs}

We first give a detailed definition of the standard translation $C^{\sharp}$ and $S^{\sharp}$ of
\TQL{} concepts $C$ and roles $S$ into two-sorted first-order logic.
The definitions are by induction as follows. For concepts:
\begin{eqnarray*}
A^{\sharp}(\xi,\tau) &  = & A(\xi,\tau),\\
\bot^{\sharp}(\xi,\tau) & = & \bot,\\
(\exists R)^{\sharp}(\xi,\tau) & = & (\exists y \;R^{\sharp}(\xi,y,\tau)),\\
%(\exists P^{-})^{\sharp}(\xi,\tau) & = & (\exists y \;P(y,\xi,\tau))\\
(C_{1}\sqcap C_{2})^{\sharp}(\xi,\tau) & = & C_{1}^{\sharp}(\xi,\tau)\wedge C_{2}^{\sharp}(\xi,\tau),\\ 
(\Diamond_{\PT}C)^{\sharp}(\xi,\tau) & = & \exists t \;((t<\tau) \wedge C^{\sharp}(\xi,t)),\\
(\Diamond_{\FT}C)^{\sharp}(\xi,\tau) & = & \exists t \;((t>\tau) \wedge C^{\sharp}(\xi,t)).
\end{eqnarray*}
For roles:
\begin{eqnarray*}
P^{\sharp}(\xi,\zeta,\tau) &  = & P(\xi,\zeta,\tau),\\
(P^{-})^{\sharp}(\xi,\zeta,\tau) & = & P(\zeta,\xi,\tau),\\
\bot^{\sharp}(\xi,\zeta,\tau) & = & \bot,\\
(S_{1}\sqcap S_{2})^{\sharp}(\xi,\zeta,\tau) & = & S_{1}^{\sharp}(\xi,\zeta,\tau)\wedge S_{2}^{\sharp}(\xi,\zeta,\tau),\\ 
(\Diamond_{\PT}S)^{\sharp}(\xi,\zeta,\tau) & = & \exists t \;((t<\tau) \wedge S^{\sharp}(\xi,\zeta,t)),\\
(\Diamond_{\FT}S)^{\sharp}(\xi,\zeta,\tau) & = & \exists t \;((t>\tau) \wedge S^{\sharp}(\xi,\zeta,t)).
\end{eqnarray*}

\medskip
 
\noindent
{\bf Lemma~\ref{lem:bound} }{\it
Suppose $\T$ is a flat TBox, let $n_\T= (4 \cdot |\T|)^{4}$. Then
$\op^{\infty}(\A^{\mathbb Z}) = \op^{n_\T}(\A^{\mathbb Z})$, for any ABox $\A$.}

\smallskip

\begin{proof}
We start with an observation that to compute $\op^{\infty}(\mathcal{S})$ it is sufficient
to first compute the closure under the rules \textbf{(r1)}--\textbf{(r3)} for role inclusions, 
then apply the rule \textbf{(ex)},
and then apply the rules \textbf{(c1)}--\textbf{(c3)} for concept inclusions. Formally, for a set $R$ of rules, 
let $\op_{R}(\mathcal{S})$ denote the result of applying the rules in $R$
(non-recursively!) to $\mathcal{S}$. Let ${\sf role}=\{\textbf{(r1)},\textbf{(r2)}, \textbf{(r3)}\}$
and ${\sf concept}= \{\textbf{(c1)},\textbf{(c2)},\textbf{(c3)}\}$. Then

\medskip

\noindent
\textbf{Fact 1.} $\op^{\infty}(\mathcal{S})= \op_{{\sf concept}}^{\infty}(\op_{\{\textbf{(ex)}\}}(\op^{\infty}_{{\sf role}}(\mathcal{S})))$, 
for all $\mathcal{S}$.

\medskip

The proof is straightforward: since none of the rules \textbf{(c1)}--\textbf{(c3)} or \textbf{(ex)} introduces a new role assertion, i.e.,
an assertion of the form $R(u,v,n)$,
no new applications of rules \textbf{(r1)}--\textbf{(r3)} become possible after applying rules  
\textbf{(c1)}--\textbf{(c3)} and \textbf{(ex)}; and no new applications of \textbf{(ex)} becomes possible after applying rules \textbf{(c1)}--\textbf{(c3)}.

\medskip

%Let $m_{r} = |\roleT|$.
We first consider the closure under the rules for role inclusions and show
$\op^{\infty}_{{\sf role}}(\A^{\mathbb{Z}}) = \op^{k_{r}}_{{\sf role}}(\A^{\mathbb{Z}})$
for $k_{r}=4|\roleT|^{4}$. We start with the observation that it is sufficient to show
this for ABoxes having at most two individuals because role assertions for individuals $u,v$
do not interact with role assertions for individuals $u',v'$ if $\{u,v\}\not=\{u',v'\}$.
Formally, for any $u,v$ ($u=v$ is not excluded), let $\A_{u,v}$ consist of all assertions $R(u,v,n)$
in $\A$. Then 

\medskip

\noindent
\textbf{Fact 2.} $\op^{k}_{{\sf role}}(\A^{\mathbb{Z}}) = \bigcup_{u,v\in {\sf Ind}(\A)}\op^{k}_{{\sf role}}(\A_{u,v}^{\mathbb{Z}})$,
for all $k\geq 0$.

\medskip

\noindent
Now let $\A$ be an ABox with individuals $u,v$.  
Observe that the rule \textbf{(r1)} is \emph{local} in the sense that the
addition of a role assertion at time point $n$ depends only on role assertions
that hold already at time point $n$. It follows that $\op_{\{\textbf{(r1)}\}}^{\infty}(\mathcal{S})= 
\op_{\{\textbf{(r1)}\}}^{|\roleT|}(\mathcal{S})$.
We now analyse the two operators obtained by adding to \textbf{(r1)} either the rule \textbf{(r2)} or the
rule \textbf{(r3)}. Let $P=\{\textbf{(r1)},\textbf{(r2)}\}$ and $F=\{\textbf{(r1)},\textbf{(r3)}\}$. For the rules in $P$ the addition of role
assertions at time point $n$ only depends on the time points $m\leq n$ and, similarly, for
$F$ the addition of role assertions at time point $n$ only depends on time points $m\geq n$.
It is now easy to see that in each case, one has to alternate between applications of
local rules and the rule $\textbf{(r2)}$ (respectively $\textbf{(r3)}$) at most $|\roleT|$ times. Thus
%\nb{$\op_P$ and $\op_F$ each require $|\roleT|\cdot(|\roleT| + 1)\leq 2|\roleT|^2$ rule applications}

\medskip

\noindent
\textbf{Fact 3.} $\op_{P}^{\infty}(\A^{\mathbb{Z}})= \op_{P}^{|\roleT|}(\A^{\mathbb{Z}})$ and 
$\op_{F}^{\infty}(\A^{\mathbb{Z}})= \op_{F}^{|\roleT|}(\A^{\mathbb{Z}})$. %, for $k_{0}= 2 m_{r}^{2}$.

\medskip

By Fact~3, to obtain a $k_{r}$ such that 
$\op_{{\sf role}}^{\infty}(\A^{\mathbb{Z}})= \op_{{\sf role}}^{k_{r}}(\A^{\mathbb{Z}})$
it is sufficient to determine an upper bound for the number of alternations between
$\op_{P}^{\infty}$ and $\op_{F}^{\infty}$ that are required to compute $\op_{{\sf role}}^{\infty}$:

\medskip

\noindent
\textbf{Fact 4.} $\op_{{\sf role}}^{\infty}(\mathcal{S}) = (\op_{P}^{\infty}\circ\op_{F}^{\infty})^{|\roleT|^2}(\mathcal{S})$.
%where $m= m_{r}^{2}$.

\medskip

To prove Fact~4 we introduce the notion of a \emph{cross over}. Assume $u,v$ are the individuals of $\mathcal{S}$.
Let $R_{1}$ and $R_{2}$ be roles. We say that $(R_{1},R_{2})$ are a cross over in $\mathcal{S}$ 
if there are $m_{1},m_{2}$ with $m_{1}+1 \geq m_{2}$ such that $R_{1}(u,v,n)\in \mathcal{S}$
for all $n\leq m_{1}$ and $R_{2}(u,v,n)\in \mathcal{S}$ for all $n\geq m_{2}$.
 
\medskip

\noindent 
\textit{Claim 1.} Let $\mathcal{S}= \op_{P}^{\infty}(\mathcal{S})$, $\mathcal{S}_{1}= \op_{F}^{\infty}(\mathcal{S})$ and 
$\mathcal{S}_{2}= \op_{P}^{\infty}(\mathcal{S}_{1})$. If $\mathcal{S}_{2}\supsetneq \mathcal{S}_{1}$
then there exists a cross over $(R_{1},R_{2})$ in $\mathcal{S}_{2}$ 
which is not a cross over in $\mathcal{S}$. 

\medskip
\noindent\textit{Proof of Claim~1.}
Since $\mathcal{S}_{1}$ is closed under \textbf{(r1)}, there exist 
$\Diamond_{\PT}R\sqsubseteq R'$ in $\T$ and $n_{1}$
such that 
\begin{itemize}
\item[--] $R(u,v,n_{1}) \in \mathcal{S}_{1}$ and there is $n>n_{1}$ with $R'(u,v,n_{1})\notin \mathcal{S}_{1}$;
\item[--] and $R'(u,v,n)\in \mathcal{S}_{2}$, for all $n>n_{1}$.
\end{itemize}
It follows that 
$R(u,v,n_{1})\not\in \mathcal{S}$;
for otherwise $R'(u,v,m)\in \mathcal{S}_{1}$ for all $n>n_{1}$. From $R(u,v,n_{1}) \in \mathcal{S}_{1}$
and, since $\mathcal{S}$ is closed under \textbf{(r1)}, there exist 
$\Diamond_{\FT}S \sqsubseteq S'$ in  $\T$ and $n_{2}>n_{1}$ such that
\begin{itemize}
\item[--] $S(u,v,n_{2})\in \mathcal{S}$ and $S'(u,v,n_{1})\notin \mathcal{S}$;
\item[--] $S'(u,v,n)\in \mathcal{S}_{1}$, for all $n<n_{2}$.
\end{itemize}
Let $m_{1}=n_{2}-1$, $m_{2}=n_{1}+1$. %and $(R_{1},R_{2})= (S',R')$.
Then $(S',R')$ is a cross over in $\mathcal{S}_{2}$ with witness times points $m_{1},m_{2}$
which is not a cross over in $\mathcal{S}$. This finishes the proof of Claim~1.

\medskip

Clearly the number of cross overs is bounded by $|\roleT|^{2}$ and so we have proved Fact~4.
We obtain from Fact~3 and Fact~4 that $\op^{\infty}_{{\sf role}}(\A^{\mathbb{Z}}) = 
\op^{k_{r}}_{{\sf role}}(\A^{\mathbb{Z}})$ for $k_{r}= 4 |\roleT|^{4}$.
%\nb{$2\cdot |\roleT|\cdot(|\roleT + 1|) \cdot |\roleT|^2 \leq 4\cdot |\roleT|^4$}

One can show in almost exactly the same way that $\op^{\infty}_{{\sf concept}}(\A^{\mathbb{Z}}) = 
\op^{k_{c}}_{{\sf concept}}(\A^{\mathbb{Z}})$ for $k_{c}=4 m_{c}^{4}$, where $m_{c}$ is the
number of concept names in $\T$. Since $4 |\roleT|^{4}+4 m_{c}^{4} \leq (4|\T|)^{4}$, this
finishes the proof of Lemma~\ref{lem:bound}.
\end{proof}

\medskip
\noindent
{\bf Lemma~\ref{lem:kk}}
\emph{Let $a \leadsto_{R}^{0} u$ in $\CKTR$.  
Then the following hold, for all basic roles $R'$\textup{:} 
\begin{itemize}
\item[1.] if $m < n < 0$ or $0< n < m$, then $R'(a,u,n)\in \CKTR$ implies $R'(a,u,m)\in \CKTR$\textup{;}
\item[2.] if $n < m = - |\roleT|$ or $|\roleT| = m < n$, then $R'(a,u,n)\in \CKTR$ iff 
$R'(a,u,m)\in \CKTR$.
\end{itemize}
}

\begin{proof}
We start with Item~1.
Assume $0<n<m$ (the case $m<n<0$ is similar and left to the reader). 
The proof is by induction over rule applications. Namely, we show 

\medskip

\noindent
{\bf Claim~1.} Let $0<n<m$.  If $R'(a,u,n)\in \op_{1}^{k}(\A^{\mathbb{Z}})$, then
$R'(a,u,m)\in \op_{1}^{k}(\A^{\mathbb{Z}})$, for all $k\geq 0$.

\medskip
 
For $\A^{\mathbb{Z}}$ itself the claim is trivial. Now assume it holds for $\op_{1}^{k}(\A^{\mathbb{Z}})$. 
Applications of \textbf{(ex)} and \textbf{(c1)}--\textbf{(c3)} do not influence the role
assertions for $(a,u)$, so we do not have to consider them. 
Applications of \textbf{(r1)}--\textbf{(r3)} clearly preserve the property stated in 
Claim~1.

\medskip

Now consider Item~2. Assume $|\roleT| = m < n$ (the case $- |\roleT| =  m > n$ is similar and left
to the reader). The proof is by induction over rule applications. In detail, we show the following 

\medskip

\noindent
{\bf Claim~2.} For all $k\geq 0$ and $\ell>1$, if there exists $R'$ with $R'(a,u,\ell)\in 
\op_{1}^{k}(\A^{\mathbb{Z}})$ and $R'(a,u,\ell-1)\not\in \op_{1}^{k}(\A^{\mathbb{Z}})$,
then $|\{R'' \mid R''(a,u,\ell)\in \op_{1}^{k}(\A^{\mathbb{Z}})\}|\geq \ell$.

\smallskip

The proof of Claim~2 is by induction over $k$ and left to the reader.
Now Item~2 follows directly with Point~1.
\end{proof}

We now analyse in more detail why one can without loss of generality assume TBoxes to be in CoNF.
Let $\T$ be a TBox in normal form. Add to $\T$ the inclusions
\begin{align}\label{eq:CoNF1}
& \exists R \sqsubseteq A^0_R, && \Diamond_{F}\exists R \sqsubseteq A^{-1}_{R}, && \Diamond_{F}A^{-m}_{R} \sqsubseteq A^{-m-1}_{R}, \\
\label{eq:CoNF2}
&&& \Diamond_{P}\exists R \sqsubseteq A^1_{R}, && \Diamond_{P}A^m_{R} \sqsubseteq A^{m+1}_{R},
\end{align}
for all $R\in \roleT$ and $0\leq m\leq |\roleT|$, where the $A_{R}^{i}$ are fresh concept
names; and add  
\begin{equation}\label{eq:CoNF:roles}
A^{m}_{R}\sqsubseteq \exists R', \text{ for } |m| \leq |\roleT| \text{ and } R'(a,v,m) \in \CKTR. %,\\
\end{equation}
Denote the resulting TBox by $\T'$. We first show that $\T'$ is a conservative extension of $\T$ in the
following sense:
\begin{lemma}\label{cons}
For every model $\I$ of $\T$ there exists a model $\I'$ of $\T'$ such that $\Delta^{\I}=\Delta^{\I'}$ and 
such that $\I'$ coincides with $\I$ for the interpretation of symbols from $\T$. 
\end{lemma}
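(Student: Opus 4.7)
The plan is to extend $\I$ by interpreting each fresh concept name $A_R^m$ in the minimal way forced by the new axioms \eqref{eq:CoNF1}--\eqref{eq:CoNF2}, while leaving every symbol of $\T$ untouched. Concretely, for each $R \in \roleT$, I would set $(A_R^0)^{\I'(n)} = (\exists R)^{\I(n)}$, $(A_R^m)^{\I'(n)} = \{x \in \Delta^\I \mid x \in (\exists R)^{\I(n')} \text{ for some } n' \le n - m\}$ when $m \ge 1$, and $(A_R^{-m})^{\I'(n)} = \{x \in \Delta^\I \mid x \in (\exists R)^{\I(n')} \text{ for some } n' \ge n + m\}$ when $m \ge 1$. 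Since the symbols of $\T$ are interpreted exactly as in $\I$, the model $\I'$ trivially satisfies every inclusion of $\T$ and coincides with $\I$ on the vocabulary of $\T$, so the conservativity clause is immediate.

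Checking \eqref{eq:CoNF1} and \eqref{eq:CoNF2} amounts to unfolding the definition. For instance, if $x \in (\Diamond_{\PT} A_R^m)^{\I'(n)}$ with $m \ge 0$ then $x \in (A_R^m)^{\I'(n_1)}$ for some $n_1 < n$, which yields $n' \le n_1 - m \le n - (m+1)$ with $x \in (\exists R)^{\I(n')}$, so $x \in (A_R^{m+1})^{\I'(n)}$; the base cases $\exists R \sqsubseteq A_R^0$, $\Diamond_{\FT}\exists R \sqsubseteq A_R^{-1}$, $\Diamond_{\PT}\exists R \sqsubseteq A_R^1$ and the $\Diamond_{\FT}$-variants are handled analogously.

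The substantive step is \eqref{eq:CoNF:roles}. Fix $m$ with $|m| \le |\roleT|$ and $R'$ with $R'(a,v,m) \in \CKTR$, and assume $m \ge 0$ (the case $m < 0$ is symmetric). Let $x \in (A_R^m)^{\I'(n)}$, so that $(x,y) \in R^{\I(n_0)}$ for some $y \in \Delta^\I$ and some $n_0 \le n - m$. Because $v$ is a fresh individual introduced by $(\leadsto)$ from $R(a,v,0)$ alone, every atom $R''(a,v,k)$ in $\CKTR$ is produced by applications of the role rules \textbf{(r1)}--\textbf{(r3)} starting from $R(a,v,0)$; since these rules are sound with respect to the semantics of $\T$, we obtain $\T \cup \{R(a,v,0)\} \models R'(a,v,m)$. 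TBox axioms are interpreted globally, so this entailment is shift-invariant: replacing $(a,v,0)$ by $(x,y,n_0)$ throughout the derivation, we infer $(x,y) \in (R')^{\I(n_0 + m)}$. Furthermore, Lemma~\ref{lem:kk} gives $R'(a,v,k) \in \CKTR$ for every $k \ge m$, and the same shift-invariance yields $(x,y) \in (R')^{\I(n_0 + k)}$ for all such $k$; choosing $k = n - n_0 \ge m$ produces $(x,y) \in (R')^{\I(n)}$, whence $x \in (\exists R')^{\I'(n)}$, as required.

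The main obstacle is making this last paragraph rigorous: one has to check that each closure step of $\op_1$ used to derive $R'(a,v,m)$ inside $\CKTR$ mirrors a genuine semantic consequence of $\T$ after translating $(a,v,0)$ to $(x,y,n_0)$, and that the membership monotonicity of $\CKTR$ established in Lemma~\ref{lem:kk} transfers to the corresponding entailment-monotonicity over arbitrary models of $\T$, which is precisely what is needed to promote $R'$ from time $n_0 + m$ to the target time $n$.
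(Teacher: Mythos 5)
Your proposal is correct and follows essentially the same route as the paper: the interpretation you give the fresh names $A_R^m$ is exactly the paper's $(A_R^m)^{\I(n)} = (\Diamond_{\PT}^m\exists R)^{\I(n)}$ (and dually for $A_R^{-m}$), the verification of the $\Diamond$-axioms is the same unfolding, and your treatment of \eqref{eq:CoNF:roles} merely spells out the paper's one-line appeal to ``$R'(a,v,m)\in\CKTR$ implies $\T\models\Diamond_{\PT}^m R\sqsubseteq R'$'' as soundness of \textbf{(r1)}--\textbf{(r3)} plus shift-invariance plus the monotonicity of Lemma~\ref{lem:kk}. The step you flag as needing rigour is exactly what the paper also leaves implicit, so there is no gap relative to the published argument.
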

\begin{proof}
Assume $\I$ is given.
We define $\I'$ as follows:
\begin{align*}
A^{0}_{R})^{\I(n)} &= (\exists R)^{\I(n)},\\
(A^{m}_{R})^{\I(n)}  & = (\Diamond_{\PT}^{m}\exists R)^{\I(n)} \quad\text{ and }\quad (A^{-m}_{R})^{\I(n)}  = (\Diamond_{\FT}^{m}\exists R)^{\I(n)},\qquad \text{ for } 0< m\leq |\roleT|.
\end{align*}
We have to show that $\I'\models\T'$. 
It is readily seen that $\I'$ satisfies the inclusions~\eqref{eq:CoNF1} and~\eqref{eq:CoNF2},
for all $R\in \roleT$ and $0\leq m\leq |\roleT|$.
The interesting part are the fresh inclusions $A^{m}_{R}\sqsubseteq \exists R'$ for $R'(a,v,m) \in \CKTR$.
Let $A^{m}_{R}\sqsubseteq \exists R'$ be such a fresh inclusion. 
Consider $m\geq 0$ and let $d\in (A^{m}_{R})^{\I(n)}$. Then $d\in (\Diamond_{P}^{m}\exists R)^{\I(n)}$.
Moreover, $R'(a,v,m) \in \CKTR$ implies, by Lemma~\ref{lem:kk}, $\T \models \Diamond_{P}^{m}R \sqsubseteq R'$.
Since $\I$ is a model of $\T$, we obtain $d\in (\exists R')^{\I(n)}$.
\end{proof}

It also follows from Lemma~\ref{cons} that 
the set of role assertions in $\CKTR$ coincides with the set of role assertions in $\C_{\K_{\T'\!\!,R}}$ and so, $\T'$ is in CoNF.

Now observe that if a TBox $\T$ is in CoNF, then one can construct $\C_{\T,\A}$ by 
\begin{itemize}
\item[--] first applying the rules {\bf (r1)}--{\bf (r3)} exhaustively to ABox individuals, 
\item[--] then applying the rules {\bf (ex)}, $(\leadsto)$ and {\bf (c1)}--{\bf (c3)} exhaustively, 
\item[--] and finally applying again {\bf (r1)}--{\bf (r3)}.  
\end{itemize}
This follows from the second part of Lemma~\ref{lem:kk} (according to which the role assertions are stable at distances
larger than $|\roleT|$ in $\CKTR$) and the inclusions~\eqref{eq:CoNF:roles}.

\medskip

\noindent
{\bf Lemma~\ref{l:anonym}}
\emph{Let $\T$ be in CoNF and $a \leadsto_{R}^{0} u$ in $\CKTR$. Then the following hold, for all basic concepts $B$\textup{:}
\begin{itemize}
\item[--] if $m < n < 0$ or $0< n < m$, then $B(u,n)\in \CKTR$ implies $B(u,m)\in \CKTR$\textup{;} 
\item[--] if $n<m =-|\T|$ or $|\T| = m < n$, then  $B(u,n)\in \CKTR$ iff 
$B(u,m)\in \CKTR$.
\end{itemize}
}
\medskip

\begin{proof}
The proof is similar to the proof of Lemma~\ref{lem:kk} and omitted.
\end{proof}

To generalize the rewriting from flat TBoxes to arbitrary TBoxes, we admit PEQs
that can contain atoms of the form $\hat{\exists R}(\xi,\tau)$, where $\exists R$ is a basic concept.
Moreover, we modify the standard translation $C^{\sharp}$ to a translation $C^{\hat{\sharp}}$ that
regards basic concepts $\exists R$ as atoms by setting 
$(\exists R)^{\hat{\sharp}}(\xi,\tau)=\hat{\exists R}(\xi,\tau)$.
We now show how the definition of $\varphi^{n\downarrow}$ is modified.
Given a generalized PEQ $\varphi$, we set $\varphi^{0\downarrow}=\varphi$ and define, inductively, 
$\varphi^{(n+1)\downarrow}$ as the result of replacing 
\begin{itemize}
\item[--] every $A(\xi,\tau)$ in $\varphi$ with 
$\displaystyle A(\xi,\tau) \vee \bigvee_{C \sqsubseteq A \in \T}(C^{\hat{\sharp}}(\xi,\tau))^{n\downarrow}$, 
\item[--] every $P(\xi,\zeta,\tau)$ in $\varphi$ with $\displaystyle P(\xi,\zeta,\tau) \vee \bigvee_{S \sqsubseteq P\in \T}
(S^{\sharp}(\xi,\zeta,\tau))^{n\downarrow}$,
%\nb{do we assume that only role names occur on the right-hand side of role inclusions?}
%
\item[--] every $\hat{\exists R}(\xi,\tau)$ in $\varphi$ with  $\displaystyle 
%(\exists y\; R(\xi,y,\tau))^{(n+1)\downarrow}
\exists y\,R^{\sharp}(\xi,\zeta,\tau) \vee \bigvee_{S \sqsubseteq R\in \T}
\exists y\,(S^{\sharp}(\xi,\zeta,\tau))^{n\downarrow}
\vee  \bigvee_{C \sqsubseteq \exists R \in \T}(C^{\hat{\sharp}}(\xi,\tau))^{n\downarrow}$.
\end{itemize}
For a query $\q(\vec{x},\vec{s})$ we now define as ${\sf ext}_{\q}^{\T}(\vec{x},\vec{s})$
the result of replacing every atom $A(\xi,\tau)$ by $(A(\xi,\tau))^{n_{\T}\downarrow}$
and every atom $P(\xi,\zeta,\tau)$ by $(P(\xi,\zeta,\tau))^{n_{\T}\downarrow}$ and replacing
in the resulting PEQ the atoms $\hat{(\exists R)}(\xi,\tau)$ by $(\exists y\, R^{\sharp}(\xi,y,\tau))$.
One can readily show that 
\begin{equation}\tag{\ref{groundlevel}}
\C_{\K}^{0} \models \q(\vec{a},\vec{n}) \quad \mbox{ iff } \quad \A^{\mathbb{Z}} \models 
{\sf ext}_{\q}^{\T}(\vec{a},\vec{n}).
\end{equation}
We also set  ${\sf ext}_{\exists R}^{\T}(\xi,\tau)= (\hat{\exists R}(\xi,\tau))^{n_{\T}\downarrow}$.
Then one can show
\begin{equation}\label{existsR}
\exists R(a,n)\in \C_{\K} \quad \mbox{ iff } \quad \A^{\mathbb{Z}} \models 
{\sf ext}_{\exists R}^{\T}(a,n).
\end{equation}

\medskip

To prove Theorem~\ref{thm:canonical-bounded}, we require some preparation.
Firstly, variations of the monotonicity Lemmas~\ref{lem:kk} and \ref{l:anonym} can be proved 
for ABox individuals in arbitrary ABoxes as well.
\begin{lemma}\label{lem:abox}
For any $\K=(\T,\A)$ with $\T$ in CoNF
%and all $m,n$ 
the following hold\textup{:}  
\begin{itemize}
\item[--] if $m < n < \min\tc(\A)$ or $\max\tc(\A) < n < m$, then 
\begin{align*}
R(a,b,n)\in \C_{\K}  \ \ \ &\text{ implies } \ \ \ R(a,b,m)\in \C_{\K},\quad\text{ for all basic roles } R \text{ and any } a,b\in \ind(\A),\\
B(a,n)\in \C_{\K} \ \ \ &\text{ implies } \ \ \ B(a,m)\in \C_{\K}, \quad\text{ for all basic concepts } B  \text{ and any } a\in\ind(\A);
\end{align*}
\item[--] if $n < m = \min\tc(\A)- |\roleT|$ or $\max\tc(\A)+|\roleT| = m < n$, 
then 
\begin{align*}
R(a,b,n)\in \C_{\K} \ \ \ &\text{ iff } \ \ \ R(a,b,m)\in \C_{\K}quad\text{ for all basic roles } R \text{ and any } a,b\in \ind(\A),\\
B(a,n)\in \C_{\K} \ \ \ &\text{ iff } \ \ \ B(a,m)\in \C_{\K}, \quad\text{ for all basic concepts } B  \text{ and any } a\in\ind(\A).
\end{align*}
\end{itemize}
\end{lemma}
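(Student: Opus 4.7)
The plan is to adapt the inductive proofs of Lemma~\ref{lem:kk} and Lemma~\ref{l:anonym} to arbitrary ABoxes, letting the interval $[\min\tc(\A), \max\tc(\A)]$ play the role that the singleton $\{0\}$ played in those earlier arguments. I will prove both clauses simultaneously, for all basic concepts and all basic roles at ABox individuals, by induction on the number $k$ of applications of $\op_1$ used to derive an atom in $\C_\K = \op_1^{\infty}(\A^{\mathbb Z})$; the role and concept statements must be carried together because rule \textbf{(ex)} turns role atoms into concept atoms, while rules \textbf{(c1)} and \textbf{(r1)} combine several atoms of the same kind at the same time point.

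For the monotonicity clause I consider the future direction $\max\tc(\A) < n < m$; the past direction is symmetric. The base case $\op_1^0(\A^{\mathbb Z}) = \A^{\mathbb Z}$ is vacuous, since no ABox atom is timestamped outside $\tc(\A)$. For the inductive step I go through each rule that could have produced the relevant atom at $n$. The local rules \textbf{(c1)}, \textbf{(r1)}, and \textbf{(ex)} have premises at the same time $n$, so they transfer to $m$ by applying the inductive hypothesis to each premise. A past-looking rule such as \textbf{(c2)} or \textbf{(r2)} uses a witness at some $k < n$, and the same witness still works at $m$ because $k < n < m$. The interesting case is \textbf{(c3)} or \textbf{(r3)}: the rule fires at $n$ with a witness at some $k > n$, and we need a witness at some $k' > m$. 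If $k > m$ we take $k' = k$; otherwise $n < k \leq m$, and we apply the inductive hypothesis to the witness atom (which sits at $k$ with $\max\tc(\A) < k < m+1$) to obtain the same atom at $m+1$, and then fire the rule at $m$ with that witness.

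For the stabilisation clause I adapt the cross-over counting argument used in Claim~2 of the proof of Lemma~\ref{lem:kk}: by the monotonicity just proved, the set $\{R \mid R(a,b,n) \in \C_\K\}$ (respectively, $\{B \mid B(a,n) \in \C_\K\}$) grows weakly as $n$ moves away from $\tc(\A)$, and each strict increase at a new time point must introduce a basic role or concept not previously present at that pair or individual. Since there are at most $|\roleT|$ basic roles and at most $|\T|$ basic concepts, these sets must stabilise within $|\roleT|$, respectively $|\T|$, steps past $\max\tc(\A)$, and symmetrically below $\min\tc(\A)$.

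The main obstacle, and the reason the statement is not immediate from Lemmas~\ref{lem:kk} and~\ref{l:anonym}, is that an arbitrary ABox involves several individuals and timestamps, so the role atoms on a pair $(a,b)$ interact with the concept atoms on $a$ through rule \textbf{(ex)} and with the $(\leadsto)$-generated subtrees hanging off each ABox individual. The CoNF assumption neutralises this coupling: the auxiliary names $A^m_R$ track $\exists R$ at every temporal offset $|m| \leq |\roleT|$, so stabilisation of roles on $(a,b)$ at distance $|\roleT|$ translates into stabilisation of the corresponding $A^m_R$-atoms on $a$, while the contribution of atoms $\exists R(a,n)$ arising from fresh individuals introduced by $(\leadsto)$ is controlled, within the same $|\T|$-window, by Lemmas~\ref{lem:kk} and~\ref{l:anonym} applied to the anonymous subtrees. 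With these pieces in place, the simultaneous induction closes.
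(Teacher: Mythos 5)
Your proposal is correct and follows essentially the same route as the paper, which simply states that the proof is ``similar to the proof of Lemma~\ref{lem:kk} and omitted'': you carry out exactly that adaptation, namely an induction over applications of $\op_1$ for the monotonicity clause (mirroring Claim~1) and a counting/pigeonhole argument over the at most $|\roleT|$ basic roles and $|\T|$ basic concepts for stabilisation (mirroring Claim~2), with the base interval $[\min\tc(\A),\max\tc(\A)]$ replacing $\{0\}$. You also correctly identify the one genuine extra subtlety --- that \textbf{(ex)} can feed concept atoms at ABox individuals from role atoms to anonymous successors --- and the CoNF axioms $A^m_R$ are indeed precisely what makes that contribution derivable by concept rules alone, so the simultaneous induction closes.
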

\begin{proof}
The proof is similar to the proof of Lemma~\ref{lem:kk} and omitted.
\end{proof}

In what follows it will often be useful to work with types. Given $(u,n)$
we denote by $\type(u,n)$ the set of basic concepts $B$ with $B(u,n)\in \C_{\K}$ and 
given $(u,n),(v,n)$ we denote by $\type(u,v,n)$ the set of basic roles $R$ with $R(u,v,n)\in \C_{\K}$,
where the knowledge base $\K$ will always be clear from the context.

Secondly, it will be useful to introduce a notation system for the individuals $u$ and
pairs $(u,n)$ in $\C_{\K}$. In detail, we identify any $u$ in $\C_{\K}$ with a vector
\begin{equation*}
(a,n_{0},R_{0},n_{1},\ldots, n_k,R_{k})
\end{equation*}
which is defined inductively as follows. 
If $u$ has depth $0$, then $u=a\in \ind(\A)$
and we denote $u$ by the singleton vector $(a)$. If $u$ has depth $k+1$ then there is
a unique $v$ of depth $k$ and $v \leadsto_{R}^{n} u$. So, if
$$
v = (a,n_{0},R_{0},\ldots,n_{k},R_{k})
$$
then we set
$$
u  = (a,n_{0},R_{0},\ldots,n_{k},R_{k},n_{k+1},R_{k+1}),
$$
where $R_{k+1}=R$ and $n_{k+1}= n - (n_{0}+\cdots +n_{k})$. 
Moreover, if $u= (a,n_{0},R_{0},\ldots,n_{k},R_{k})$, then the pair $(u,n)$ is identified with
\begin{equation*}
(a,n_{0},R_{0},n_{1},\ldots, n_k,R_{k},n_{k+1}),
\end{equation*}
where $n_{k+1}= n-(n_{0}+\cdots +n_{k})$. Observe that we can recover the time point
$n$ of any pair $(u,n)$ %representedas $(a,n_{0},R_{0},n_{1},\ldots, R_{k},n_{k+1})$ 
as 
$n= n_{0}+\cdots + n_{k+1}$.

Observe that not every vector of this from % $(a,n_{0},R_{0},n_{1},\ldots, R_{k})$ 
is identical
to some individual $u$ in $\C_{\K}$. It is easy to see which ones are, however:
$(a,n_{0},R_{0},n_{1},\ldots, R_{k})$ is identical to some $u$ in $\C_{\K}$ iff,
inductively,
\begin{equation*}
a\in \ind(\A) \quad\text{ and }\quad \exists R_{i}\in \type(a,n_{0},R_{0},\ldots,n_{i}), \text{ for all } 0\leq i \leq k. 
\end{equation*}

\begin{lemma}\label{gettingclose}
For any $u$ in $\CKTR$, there is $v$ of depth $\le 2|\roleT|$ such that 
$B(u,k)\in \CKTR$ implies $B(v,k)\in \CKTR$, for all $k \in \mathbb Z$ and all
basic concepts $B$.
\end{lemma}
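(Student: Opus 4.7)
The plan is to exploit the CoNF structure of $\T$. Write $B\in\type(u,k)$ as shorthand for $B(u,k)\in\CKTR$. The key observation is that, because the closure rules underlying $\op_1$ that introduce an anonymous $u$ only supply it with the single seed atom $\exists R^-(u,m_u)$ arising from the step $w\leadsto_R^{m_u} u$, the full type function $k\mapsto\type(u,k)$ of an anonymous $u$ depends on only two parameters: the \emph{generator} $S_u=R^-$ and the creation time $m_u$. Concretely, $\type(u,k)=\bar{\tau}_{S_u}(k-m_u)$ for a function $\bar{\tau}_S\colon\mathbb{Z}\to 2^{\text{basic concepts}}$ determined by $\T$ alone. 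By Lemma~\ref{l:anonym}, each $\bar{\tau}_S(\delta)$ is monotone in $|\delta|$ and constant for $|\delta|\ge|\T|$. In particular, two anonymous individuals sharing a generator have type functions that are translates of one another, so it suffices to produce, for any $u$ of depth $d>2|\roleT|$, some $v$ of depth $\le 2|\roleT|$ with $S_v=S_u$ and $m_v=m_u$: then $\type(v,\cdot)=\type(u,\cdot)$ on all of $\mathbb{Z}$.

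Pigeonhole on the ancestor chain $a=v_0\leadsto v_1\leadsto\cdots\leadsto v_d=u$ supplies such a $v$. Using the vector notation $(a,n_0,R_0,\ldots,n_d,R_d)$ from the excerpt, set $S_i=R_i^-$ and $m_i=n_0+\cdots+n_i$. The generators $S_1,\ldots,S_d$ lie in $\roleT$, so $d>|\roleT|$ already forces a repetition $S_i=S_j$ for some $i<j$. Since $\bar{\tau}_{S_i}=\bar{\tau}_{S_j}$, the atom $\exists R_{j+1}\in\bar{\tau}_{S_i}(n_{j+1})$ forces, via CoNF and rule $(\leadsto)$, an $R_{j+1}$-successor of $v_i$ at time $m_i+n_{j+1}$. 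Propagating this substitution along the tail $R_{j+2},\ldots,R_d$ produces a shortcut element $w^*$ of depth $d-(j-i)$ whose generator coincides with $S_u$ and whose creation time is $m_u-(n_{i+1}+\cdots+n_j)$.

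A single shortcut corrects the generator but leaves a residual time shift $\Delta=-(n_{i+1}+\cdots+n_j)$. The factor $2$ in the depth bound enters through a second application of pigeonhole: in a chain of depth exceeding $2|\roleT|$ one can either select two disjoint repeated-generator intervals whose associated shifts compose to $0$ (giving a $w^*$ with the exact creation time of $u$ at depth $\le d-2$), or select the pair $(i,j)$ so that the residual $|\Delta|\ge|\T|$, in which case the stabilization clause of Lemma~\ref{l:anonym} ensures $\bar{\tau}_{S_u}(\delta-\Delta)\supseteq\bar{\tau}_{S_u}(\delta)$ for every $\delta$ on the relevant side, so that $w^*$ still dominates $u$'s type at every time point. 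Iterating the shortcut construction until the depth drops to $\le 2|\roleT|$ yields the required $v$.

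The main obstacle is precisely this second step: a single pigeonhole on $\roleT$ trivially matches the generator but leaves a residual time shift that need not vanish, and forcing simultaneous agreement of generator \emph{and} creation time (or absorbing the mismatch into the stable region of $\bar{\tau}_{S_u}$) is what demands a refined combinatorial pairing of shortcuts together with the monotonicity statements of Lemmas~\ref{lem:kk} and \ref{l:anonym}. This is what makes $2|\roleT|$ the natural bound, as opposed to the naive $|\roleT|$ that would suffice only for matching the generator.
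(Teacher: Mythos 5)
Your key observation---that under CoNF the type function $k\mapsto\type(u,k)$ of an anonymous individual is a translate of a function $\bar{\tau}_{S}$ determined only by the generating role and the creation time---is correct and is also what underlies the paper's argument (it is why the paper's Case~1, where the excised segment has zero total shift, gives $\type(u,k)=\type(v,k)$ outright). Your first step, pigeonholing on repeated generator roles along the ancestor chain and splicing out the segment between $R_i=R_j$, is likewise the paper's starting point. The gap is in how you dispose of the residual shift $\Delta=-(n_{i+1}+\cdots+n_j)$. Neither of your two alternatives is justified: there is no combinatorial reason why two disjoint repeated-generator segments should have shifts summing to zero (the $\delta_{ij}$ are arbitrary integers), nor why some pair $(i,j)$ should yield $|\Delta|\ge|\T|$ (all the $n_l$ could be $0$ or $\pm 1$). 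Worse, even when $|\Delta|\ge|\T|$ the stabilisation argument cannot close the case: monotonicity gives $\bar{\tau}_{S_u}(\delta)\subseteq\bar{\tau}_{S_u}(\delta')$ only when $\delta,\delta'$ lie on the \emph{same side} of $0$ with $|\delta|\le|\delta'|$, whereas a nonzero global translation by $\Delta$ necessarily moves some time points towards (or across) the creation point, where the inclusion reverses. So your $w^*$ need not dominate $\type(u,\cdot)$ at every $k\in\mathbb{Z}$, which is what the lemma demands.

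The paper avoids the residual shift altogether: after deleting $(n_{i+1},R_{i+1},\ldots,n_j,R_j)$ it \emph{re-inserts} the lost amount $\delta_{ij}$ into a single later coordinate, replacing $n_{j'+1}$ by $n_{j'+1}+\delta_{ij}$ for some $j'\ge j$ with $n_{j'+1}$ of the same sign as $\delta_{ij}$ (Cases~2--5 of the paper's proof). This preserves the total sum, hence the creation time of the resulting $v$ equals that of $u$ and the generator is unchanged; and because $0<n_{j'+1}<n_{j'+1}+\delta_{ij}$ (or dually), Lemma~\ref{l:anonym} gives $\type$-inclusion at the modified node and hence, propagated along the tail, both the existence of $v$ in $\CKTR$ and $\type(u,k)\subseteq\type(v,k)$ for all $k$. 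The bound $2|\roleT|$ (rather than $|\roleT|$) is what guarantees a repeated pair $(i,j)$ for which a same-signed $n_{j'+1}$ is available. To repair your proof you would need to replace your step~4 by this sign-matched absorption of $\delta_{ij}$, rather than attempting to cancel or neutralise $\Delta$ after the fact.
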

\begin{proof}
Let  $u = (a,n_{0},R_{0},n_{1},\ldots, R_{m})$.
%Our aim is to construct $v$ of depth $\le 2|\roleT|$ with $B(w,k)\in \CKTR$ 
%implies $B(v,k)\in \CKTR$, for all $k \in \mathbb Z$ and all basic concepts $B$.
Assume the depth of $u$ exceeds $2|\roleT|$; that is, $m\geq 2|\roleT|$.
We construct a $v$ of depth smaller than $u$ such that $B(u,k)\in \CKTR$ 
implies $B(v,k)\in \CKTR$, for all $k \in \mathbb Z$ and all basic concepts $B$.
The claim then follows by applying the construction again until the depth of
the resulting individual does not exceed $2|\roleT|$. Suppose there is $i < j$ with $R_i = R_j$. 
Let $\delta_{ij} = n_{i+1} + \dots + n_j$.

\medskip

\noindent \textit{Case 1.} If $\delta_{ij} = 0$  
then we remove the sequence $(n_{i+1},R_{i+1},\ldots,n_{j},R_{j})$ from $u$  and set 
\begin{equation*}
v  =   (a,n_{0},R_{0},\ldots, n_{i},R_{i},\hspace*{1em}n_{j+1},\ldots, R_{m}).
\end{equation*}
Clearly, $v$ belongs to $\CKTR$ and $\type(u,k)= \type(v,k)$  for all $k\in \mathbb{Z}$.

\medskip

\noindent
\textit{Case 2.} If $\delta_{ij} > 0$ and $n_{j+1} > 0$ 
then we  remove $(n_{i+1},R_{i+1},\ldots,n_{j},R_{j})$,
replace $n_{j+1}$ by $n_{j+1} + \delta_{ij}$ and set
\begin{equation*}
v   =  (a,n_{0},R_{0},\ldots, n_{i},R_{i},\hspace*{1em} n_{j+1} + \delta_{ij}, R_{j+1},n_{j+2},\ldots, R_{m}).
\end{equation*}
Note that
we have $0<n_{j+1}< n_{j+1} + \delta_{ij}$ and so, by Lemma~\ref{l:anonym},
\begin{equation*}
\type(a,n_{0},R_{0},\ldots, n_j,R_j,n_{j+1})
\subseteq 
\type(a,n_{0},R_{0},\ldots, n_{i},R_{i},n_{j+1}+\delta_{ij}).
\end{equation*}
Hence, $v$ belongs to $\CKTR$ and 
$\type(u,k)\subseteq \type(v,k)$ for all $k\in \mathbb{Z}$.

\medskip

\noindent
\textit{Case 3.} If $\delta_{ij} <0$ and  $n_{j+1}< 0$ then this case is dual to Case 2.

\medskip

\noindent
\textit{Case 4.} If $\delta_{ij} > 0$ and
 $n_{j+1} < 0$ and there exists $j'>j$ such that $n_{j'+1}>0$
then we remove  $(n_{i+1},R_{i+1},\ldots,n_{j},R_{j})$, replace
$n_{j'+1}$ by $n_{j'+1} + \delta_{ij}$ and set
\begin{equation*}
v  =   (a,n_{0},R_{0},\ldots, n_{i},R_{i},\hspace*{1em} n_{j+1},R_{j+1},\ldots,R_{j'},n_{j'+1}+\delta_{ij},R_{j'+1},n_{j'+2},\ldots, R_{m}).
\end{equation*}
We have $0 < n_{j'+1}<\delta_{ij}+n_{j'+1}$ and
so, by Lemma~\ref{l:anonym}, 
\begin{equation*}
\type(a,n_{0},R_{0},\ldots,n_{j'},R_{j'},n_{j'+1})
\subseteq \type(a,n_{0},R_{0},\ldots, n_{i},R_{i},n_{j+1},R_{j+1},\ldots,R_{j'},n_{j'+1}+\delta_{ij})
\end{equation*}
It follows that $v$ belongs to $\CKTR$ with 
$\type(u,k)\subseteq \type(v,k)$ for all $k\in \mathbb{Z}$.

\medskip

\noindent
\textit{Case 5.}  If $\delta_{ij} < 0$, $n_{j+1}>0$ and there exists $j'>j$ such that $n_{j'+1}<0$ then this case is dual to Case~4.\\\mbox{}
\end{proof}

\medskip
\noindent
{\bf Theorem~\ref{thm:canonical-bounded}}
\emph{
There are polynomials $f_1$ and $f_2$ such that, for any consistent  \TQL{} KB $\mathcal{K} = (\mathcal{T}, \mathcal{A})$, any CQ $\q(\vec{x}, \vec{s})$ and any $\vec{a} \subseteq \ind(\A)$ and $\vec{n} \subseteq \tc(\A)$, we have $\mathcal{K} \models \q(\vec{a}, \vec{n})$ iff there is a homomorphism $h\colon \q \to \C_\K$ such that  $h(\q) \subseteq \C^{\smash{d,\ell}}_\K$, where $d = f_1(|\mathcal{T}|,|\q|)$ and $\ell = f_2(|\mathcal{T}|,|\q|)$.}

\medskip

\noindent
\begin{proof}
Assume that a homomorphism $h\colon \q \rightarrow \C_{\K}$ is given.
A sub-query of $\q$ is a non-empty subset $\q' \subseteq \q$
containing all temporal atoms in $\q$.
We assume that $\q$ is totally ordered. First we consider the parameter $d$ for
the depth required to find a match for $\q$ in $\C_{\K}^{d}$.
We set %$f_{1}(|\T|,|\q|) 
$d= 2|\roleT| + |\q|$.

A sub-query $\q'$ of $\q$ is \emph{connected} if for any two individual terms $\xi_{1},\xi_{2}$ in $\q'$
there are $\tau_{1},\ldots,\tau_{n}$ and $\zeta_{0},\ldots,\zeta_{n}$ such that there are roles $R_{1},
\ldots,R_{n}$ with  $R_{1}(\zeta_{0},\zeta_{1},\tau_{1}),\ldots,R_{n}(\zeta_{n-1},\zeta_{n},\tau_{n})\in \q'$
such that $\zeta_{0}=\xi_{1}$ and $\zeta_{n}=\xi_{2}$.
% and
%\begin{itemize}
%\item either $\taux=x_{0}$ or there exists $A(x,t_{1})\in \q'$;
%\item either $y = x_{n}$ or there exists $A(y,t_{n})\in \q'$.
%\end{itemize}
We consider the maximal connected components $\q_{1},\ldots,\q_{n}$ of $\q$
and construct the new homomorphism $h'$ as follows:
\begin{itemize}
\item[--] $h'(\tau)=h(\tau)$ for all temporal terms $\tau$;
\item[--] For all $\q_{i}$ such that $h(\xi)\in \ind(\A)$ for some 
$\xi$ in $\q_{i}$ we have $h(y) \in \C_{\K}^{|\q_{i}|}$ for all $y$ in $\q_{i}$. Thus,
for individual terms $\xi$ in such a $\q_{i}$ we set $h'(\xi)=h(\xi)$.
\item[--] Let $\q_{i}$ be such that $h(\xi)\not\in \C_{\K}^{2|\roleT|+|\q|}$ for some $\xi$ in $\q_{i}$.
Take $\xi$ from $\q_{i}$ such that $h(\xi)$ is of minimal depth in $\C_{\K}$, which, by our assumption, exceeds $2|\roleT|$. By Lemma~\ref{gettingclose}, we find $v$ of depth $\leq 2|\roleT|$ such that
$\type(h(\xi),m) \subseteq \type(v,m)$ for all $m\in \mathbb{Z}$. Assume
\begin{equation*}
v=(a,n_{0},R_0,\ldots,n_{k},R_{k})\quad \text{ and }\quad
h(\xi) = (b,m_{0},R_0',\ldots,m_{r},R_{r}').
\end{equation*}
Let $\xi'$ be an individual variable in $\q_{i}$. 
Then $h(\xi')$ is the concatenation of $h(\xi)$ and some vector 
\begin{equation*}
(m_{r+1},R_{r+1}',m_{r+2},\ldots, m_{s},R_{s}').
\end{equation*}
Denote $m= (m_{0}+\cdots + m_{r})-(n_{0}+\cdots+ n_{k})$
and define 
\begin{equation*}
h'(\xi') \ \ =  \ \ (a,n_{0},\ldots,n_{k},R_{k}, \ \ 
m_{r+1}+m,R_{r+1}',m_{r+2},\ldots,m_{s},R_{s}').
\end{equation*}
\end{itemize}
The resulting $h'$ is the required homomorphism.
%after this has been done for all $\q_{i}$ is as required.

\medskip

Now we consider the bound $\ell$ for the ``width'' of the match for $\q$. 
Assume that $h\colon \q \rightarrow \C_{\K}^{d}$ for $d= 2|\roleT| + |\q|$.
We set $\ell = |\T|\cdot |\q|\cdot d$
%f_{2}(|\T|,|\q|)$, where 
%
%$$
%f_{2}(|\T|,|\q|)= ,
%$$
%
and transform $h$ into a homomorphism $h'\colon \q \rightarrow \C_{\K}^{\smash{d,\ell}}$. It should be clear that the required polynomials are $f_1(|\T|,|\q|) = 4|\T| + |\q|$ and $f_2(|\T|,|\q|) = |\T| \cdot |\q| \cdot f_1(|\T|,|\q|)$, respectively. 

%For a vector $g=(a,n_{0},R_{0},\ldots,R_{k-1},n_{k})$ representing a pair $(v,n)$ we define
%the \emph{temporal extension} $g^{t}$ of $g$ as the set of all time points
%$$
%n_{0},n_{0}+n_{1},\ldots,n_{0}+\cdots+n_{k}
%$$
%Similarly, 

We define the \emph{temporal extension} $u^t$ of an individual $u$ with representation
$(a,n_{0},R_{0},\ldots,R_{k-1},n_{k},R_{k})$ as the set $\{\overline{n}_{0},\overline{n}_{1},\ldots,\overline{n}_{k}\}$, where
\begin{equation*}
\overline{n}_{k} = n_{0}+\cdots+n_{k}.
\end{equation*}
% 
%The temporal extension of a set $\mathcal{X}$ of such vectors is defined as
%$\mathcal{X}^{t}=\bigcup_{u\in \mathcal{X}}u^{t}$. 
%
By $\mathcal{H}^t_{h}$ we denote the set of all $h(\tau)$ and all $(h(\xi))^t$, with $h(\xi)$ are represented as vectors as introduced above. Let $\mathcal{M}=\{\min\tc(\A)-|\T|,\max\tc(\A)+|\T|\}$.
The homomorphism $h'$ we are going to construct will have the following property:
\begin{itemize}
\item $|m_1-m_2|\leq |\T|$, for any two $m_{1},m_{2} \in \mathcal{H}_{h'}^{t}\cup \mathcal{M}$ such that there is no 
$m\in  \mathcal{H}_{h'}^{t}$ between $m_1,m_2$ and such that
$m_1,m_2\leq\min\tc(\A)-|\T|$ or $m_1,m_2\geq\max\tc(\A)+|\T|$.
\end{itemize}
Assume such an $h'$ has been constructed. The cardinality of $\mathcal{H}_{h'}^{t}$ is
bounded by $|\q|\cdot d$ (since $h'$ is into $\C_{\K}^d$).
Hence $h'$ is into $\C_{\K}^{\smash{d,\ell}}$, %for $\ell = |\T|\times |\q|\times (2|\roleT| + |\q|)$,
as required.

For the construction of $h'$, let $m_1,m_2 \in \mathcal{H}_{h}^{t}\cup \mathcal{M}$ be such that
there is no $m\in  \mathcal{H}_{h}^{t}$ between $m_1$ and $m_2$ and such that
$m_1<m_2\leq \min\tc(\A)-|\T|$ or $\max\tc(\A)+|\T|\leq m_1<m_2$. Assume $|m_2-m_1|>|\T|$
and that, without loss of generality, $\max\tc(\A)+|\T|\leq m_1<m_2$.
We define $h'$ as follows. Let $m= (m_2-m_1)-|\T|$.
Then, for all temporal terms $\tau$ we set 
\begin{equation*}
h'(\tau)= \begin{cases}
h(\tau), & \text{if } h(\tau)\leq m_1,\\ 
h(\tau)-m, & \text{if } h(\tau)\geq m_2.
\end{cases}
\end{equation*}
To define $h'(\xi)$ assume that $h(\xi)= (a,n_{0},R_{0},n_{1},\ldots, R_{k})$.
If the temporal extension of $h(\xi)$ contains no time point $> m_1$, then $h'(\xi)=h(\xi)$.
Otherwise, 
\begin{itemize}
\item[--] replace $n_0$ by $n_{0}-m$  if $n_{0}\geq m_2$, 
\item[--] replace all $n_i$ by $n_i -m$ if $\overline{n}_{i-1}\leq m_{1}$ and  $\overline{n}_i\geq m_{2}$,
\item[--] replace all $n_i$ by $n_i +m$ if  $\overline{n}_{i-1}\geq m_2$ and $\overline{n}_i\leq m_{1}$
\end{itemize}
and let $h'(\xi)$ be the resulting vector.
Using Lemmas~\ref{lem:abox}, \ref{lem:kk}, and \ref{l:anonym} one can readily check that
$h'$ is a homomorphism. 

After applying the above construction exhaustively, the resulting $h'$ is as required.
\end{proof}

\noindent
{\bf Theorem~\ref{thm:rewriting}}\ 
{\em Let $\T$ be a \TQL{} TBox in CoNF and $\q(\vec{x}, \vec{s})$ a totally ordered  CQ. Then, for any consistent KB $(\T,\A)$ and any tuples $\vec{a} \subseteq \ind(\A)$ and $\vec{n} \subseteq \mathbb Z$, 
$$
(\T, \A) \models \q(\vec{a},\vec{n}) \quad \text{iff} \quad \A^{\mathbb Z} \models \q^*(\vec{a},\vec{n}).
$$}
\begin{proof}
$(\Rightarrow)$ Suppose $(\T, \A) \models \q(\vec{a},\vec{n})$. Then, by Theorem~\ref{thm:canonical-bounded}, there is a homomorphism $g \colon \q(\vec{a},\vec{n}) \to \C^{\smash{d,\ell}}_\K$. If there is no individual variable $y$ such that $g(y) \notin \ind(\A)$, then we have $\C^{0}_\K \models \q(\vec{a},\vec{n})$, and so, by~\eqref{groundlevel}, $\A^{\mathbb Z} \models \ext^\T_{\q}(\vec{a},\vec{n})$, from which $\A^{\mathbb Z} \models \q^*(\vec{a},\vec{n})$ (just take $\mathfrak S = \emptyset$). 

Otherwise, we take a variable $y$ with $g(y) \notin \ind(\A)$ and consider the minimal set $\Theta_y$ of atoms in $\q$ with the following property: (i) all atoms containing $y$ are in $\Theta_y$; (ii) if $z$ is an individual variable in an atom from $\Theta_y$ and $g(z) \notin \ind(\A)$ then all atoms with $z$ are in $\Theta_y$. Consider the sub-query $\q' \subseteq \q$ comprised of all atoms in $\Theta_y$ (and all the temporal constraints in $\q$). Denote by $\mathcal{X}$ the set of individual terms in $\q'$ sent by $g$ to $\ind(\A)$, and by $\mathcal{Y}$ the remaining individual terms in $\q'$. Clearly, all elements of $\mathcal{Y}$ are existentially quantified variables. In view of the UNA, $\mathcal{X}$ contains at most one constant. By the construction of $\C_\K$ and $\Theta_y$, there are unique $b \in \ind(\A)$, $n \in \mathbb Z$ and a role $R$ such that $b \leadsto_R^n u$, for some $u$, and, for every $z \in \mathcal{Y}$, we have $b \leadsto_R^n u \leadsto^{n_1}_{R_1} \dots \leadsto^{n_m}_{R_m} g(z)$, for some $n_i$.

Consider now $\CKTR$ with root $a$. Let $e$ be the natural embedding of $\CKTR \setminus \{A(a,0)\}$ to $\C_\K$ such that $e(a) = b$, $e(0) = n$ and $e(R(a,v,0)) = R(b,u,n)$. Let $h \colon \q' \to \CKTR$ be such that $h \circ e = g$ on $\q'$. The map $h$ is not necessarily a tree witness for $R$ with $\q_h = \q'$: although $h(\q') \subseteq \CKTR^{\smash{d}}$, we may have $h(\q') \not\subseteq \CKTR^{\smash{d,\ell}}$. We use $h$ to construct a tree witness $h'$ for $R$ in the same way as in the proof of Theorem~\ref{thm:canonical-bounded}. 

Let $\mathfrak S$ be the set of all the distinct tree witnesses constructed in this way for the individual variables in $\q$. Observe that if $y,z \in \mathcal{Y}$, then $y$ and $z$ belong to the same tree witness. It follows that $\mathfrak S$ is consistent.
It follows now from~\eqref{groundlevel}, \eqref{existsR} 
and the proof of Theorem~\ref{thm:canonical-bounded} that 
\begin{equation*}
\A^{\mathbb Z} \models^g \bigwedge_{h' \in \mathfrak S} \mathsf{tw}_{h'}  \ \land \ \ext^\T_{\q \setminus \mathfrak S}. 
\end{equation*}

$(\Leftarrow)$ Suppose $A^{\mathbb Z} \models^{g'} \bigwedge_{h \in \mathfrak S} \mathsf{tw}_h  \ \land \ \ext^\T_{\q \setminus \mathfrak S}$ for some consistent set $\mathfrak S$ of tree witnesses and some assignment $g'$. Our aim is to extend $g'$ to an assignment $g$ in $\C_\K$ under which $\C_\K \models^g \q$. We set $g$ to coincide with $g'$ on those terms that occur in $\q^*$. Thus, it remains to define $g$ on the individual variables occurring in every $\q_h$, $h\in \mathfrak S$. Suppose $h$ is a tree witness for $R$, $g'(r_a) = n$ and $g'(x_h) = b$, where $r_a$ and $x_h$ are from $\tw_h$. By \eqref{existsR}, we have $\C_\K \models \exists R (b,n)$. Denote by $e$ the natural embedding of $\CKTR$ to $\C_\K$. Now, for every existentially quantified variable $y \in \mathcal{Y}_h$, we set $g(y) = h(e(y))$. Note that this definition is sound as different tree witnesses $h_1$ and $h_2$ in $\mathfrak S$ do not share variables apart form $\mathcal{X}_{h_1}\cap\mathcal{X}_{h_2}$, and the variables in the $\mathcal{Y}_{h_i}$ do not occur in $\q^*$. That the resulting $g$ is a homomorphism from $\q$ to $\C_\K$ follows from Lemmas~\ref{lem:kk} and~\ref{l:anonym} for the atoms in the tree witnesses $\q_h$ and from \eqref{groundlevel} and~\eqref{existsR} for the remaining atoms.
\end{proof}

%******************

\noindent
{\bf Theorem~\ref{thm:conphard}}
{\em Answering CQs over $\T = \{A \sqsubseteq \Diamond_{\FT}B\}$ is \coNP-hard for data complexity.}
\smallskip

\begin{proof}
The proof is by reduction of 2+2-SAT, a variant of propositional satisfiability that was first introduced by Schaerf as a tool for establishing lower bounds for the data complexity of query answering in a DL context~\cite{Schaerf-93}.  A \emph{2+2 clause} is of the form $(p_1 \vee p_2 \vee \neg n_1 \vee \neg n_2)$, where each of $p_1,p_2,n_1,n_2$ is a propositional variable or a truth constant $0$ and $1$. A \emph{2+2 formula} is a finite conjunction of 2+2 clauses. Now, 2+2-SAT is the problem of deciding whether a given 2+2 formula is satisfiable. 2+2-SAT is \NP-complete~\cite{Schaerf-93}.

Let $\varphi=c_0 \wedge \cdots \wedge c_{n}$ be a 2+2 formula in
propositional variables $\pi_0,\dots,\pi_{m}$, and let $c_i=p_{i,1} \vee
p_{i,2} \vee \neg n_{i,1} \vee \neg n_{i,2}$ for all $i \leq n$. 
Our aim is to define an ABox $\A_\varphi$ and a CQ $\q$ such that $\varphi$ is
unsatisfiable iff $(\T,\A_\varphi) \models \q$. We expand $\A$ and $\q_{1}\vee \q_{2}$ from Section~\ref{sec:non-rewritability}.
Namely, we represent the formula $\varphi$ in the ABox $\A_\varphi$ as follows:
\begin{itemize}
\item[--] the individual name $f$ represents the formula $\varphi$ and the individual names $c_0,\dots,c_n$ represent the clauses of $\varphi$;
\item[--] the assertions $S(f,c_0,0), \dots, S(f,c_{n},0)$ associate $f$
    with its clauses (at time point $0$), where $S$ is a role name;
\item[--] the individual names $\pi_0,\dots,\pi_m$ represent the propositional variables,
    and the individual names ${\sf false}$, ${\sf true}$ represent truth constants;
\item[--] the assertions, for each $i \leq n$,
\begin{equation*}
    P_1(c_i,p_{i,1},0),\ \ 
    P_2(c_i,p_{i,2},0), \ \ N_1(c_i,n_{i,1},0), \ \ N_2(c_i,n_{i,2},0)
\end{equation*}
associate each clause with the four variables/truth constants that occur in it (at time point $0$), where $P_1,P_2,N_1,N_2$ are role names.
\end{itemize}
We further extend $\A_\varphi$ to enforce a truth value for each of the variables $\pi_i$. To this end, add  $\A_0,\dots,\A_{m}$ to $\A_\varphi$, where $\A_i = \{ A(a^i,0), C(a^i,1) \}$. Intuitively, $\A_i$ is a copy of $\A$ from Section~\ref{sec:non-rewritability} and is used to generate a truth value for the variable $\pi_i$, where we want to interpret $\pi_i$ as true if $\q_{1}(a)$ is satisfied and as false if $\q_{2}(a)$ is satisfied. To actually relate each individual name $\pi_i$ to the associated ABox $\A_i$, we use role name $R$.
More specifically, we extend $\A_{\varphi}$ as follows:
\begin{itemize}
\item[--] link variables $\pi_i$ to the ABoxes $\A_i$ by adding assertions $R(\pi_i,a^{i},0)$ for all $i \leq m$;
thus, truth of $\pi_i$ means that ${\sf tt}(\pi_i)$ is satisfied and falsity means that ${\sf ff}(\pi_i)$ is satisfied, where
\begin{align*}
{\sf tt}(y) & = \exists z,t\, \bigl(R(y,z,0) \wedge C(z,t) \land B(z,t)\bigr),\\ %\q_{1}(x)),\\
{\sf ff}(y) & = \exists z,t,t' \,\bigl(R(y,z,0) \wedge  (t < t') \land C(z,t) \land B(z,t')\bigr); %\q_{2}(x));
\end{align*}

\item[--] to ensure that ${\sf false}$ and ${\sf true}$ have the expected truth values, add the following assertions to the ABox:
\begin{align*} 
&R({\sf true},{\sf true},0),A({\sf true},0),C({\sf true},1),B({\sf true},1),\\
&R({\sf false},{\sf false},0),A({\sf false},0),C({\sf false},1),B({\sf false},2).
\end{align*}
\end{itemize}
Consider the query
\begin{align*}
\q  =  \exists x\, \bigl(S(f,x,0) \ \ \wedge\ \  
              & (\exists y_{1} (P_{1}(x,y_{1},0) \wedge {\sf ff}(y_{1}))) \ \ \wedge \ \
              (\exists y_{2} (P_{2}(x,y_{2},0) \wedge {\sf ff}(y_{2})))\wedge {} \\
             & (\exists y_{3} (N_{1}(x,y_{3},0) \wedge {\sf tt}(y_{3}))) \ \ \wedge \ \
               (\exists y_{4} (N_{2}(x,y_{4},0) \wedge {\sf tt}(y_{4})))\bigr),
\end{align*}
which describes the existence of a clause with only false literals and thus captures falsity of $\varphi$. It is straightforward to show that $\varphi$ is unsatisfiable iff $(\T,\A_\varphi) \models \q$. To obtain the desired CQ, it remains to pull the existential quantifiers out.
\end{proof}

\end{document}